\theoremstyle{plain}
\newtheorem{theorem}{Theorem}
\newtheorem{lemma}[theorem]{Lemma}
\newtheorem{proposition}[theorem]{Proposition}
\theoremstyle{definition}
\newtheorem{definition}{Definition}
\theoremstyle{remark}
\newtheorem{remark}{Remark}
\theoremstyle{plain}
\newcommand*{\institute}[1]{\date{\normalsize #1}}
\newcommand*{\email}[1]{{\ttfamily #1}}
\newcommand*{\nullornot}[3]{\def\@one{#1}\ifx\@one\empty#2\else#3\fi}
\newcommand*{\suffixed}[2]{\nullornot{#1}{#2}{#2_{#1}}}
\newcommand*{\mathopx}[1]{\mathop{#1\protect\mathstrut}\nolimits}
\newcommand*{\abst}[2][]{\mathopx{#1#2.}}
\newcommand*{\mathrelrel}[1]{\mathrel{\,#1\,}}
\newcommand*{\mathconj}[1]{\mathrel{\;#1\;}}
\newcommand*{\ctext}[1]{\mathrm{#1}}
\newcommand*{\const}[2][]{\suffixed{#1}{\ctext{#2}}}
\newcommand*{\operator}[2][]{\mathopx{\ctext{#2}}_{#1}}
\newcommand*{\ftext}[1]{\texttt{#1}}
\newcommand*{\eql}[1][]{\suffixed{#1}{\mathrel{=}}}
\newcommand{\comp}{\mathbin{\circ}}
\newcommand{\blank}{\mathord{-}}
\newcommand*{\norm}[1]{\lvert#1\rvert}
\newcommand*{\vect}[1]{\ensuremath{\overrightarrow{#1}}}
\newcommand*{\tuple}[1]{\langle#1\rangle}
\newcommand*{\eval}[1]{[\![#1]\!]}
\newcommand{\prove}{\mathrelrel{\vdash}}
\newcommand{\csep}{\mathrelrel{\mid}}
\def\imply{\mathbin{\supset}}
\newcommand{\modal}{\mathord{\Box}}
\newcommand*{\mor}[3][]{\mathopx{#1}({#2},{#3})}
\newcommand*{\unit}[1][]{\suffixed{#1}{\eta}}
\newcommand*{\counit}[1][]{\suffixed{#1}{\varepsilon}}
\newcommand*{\mult}[1][]{\suffixed{#1}{\mu}}
\newcommand*{\comult}[1][]{\suffixed{#1}{\delta}}
\newcommand{\type}{\mathrel{:}}
\newcommand*{\uptype}[2]{#2^{#1\!}}
\newcommand*{\var}[2]{{#1}\mathbin{:}{#2}}
\newcommand{\bnfeq}{\mathrel{::=}}
\newcommand{\bnfor}{\mathrel{\mid}}
\newcommand*{\labst}[1]{\abst[\lambda]{#1}}
\newcommand*{\freev}[1]{\operator{FV}(#1)}
\newcommand*{\msubst}[1]{\{#1\}}
\newcommand*{\subst}[2]{\msubst{\nullornot{#1}{#2}{\substx{#1}{#2}}}}
\newcommand*{\substx}[2]{{#2}/{#1}}
\newcommand*{\letin}[2]{\letinlet{{#1}\letinbe{#2}}\letinin}
\newcommand{\letinlet}{\mathopx{\ftext{let}}}
\newcommand{\letinbe}{\mathbin{\ftext{be}}}
\newcommand{\letinin}{\mathrel{\ftext{in}}}
\newcommand*{\linecont}[1][M]{\phantom{#1}}
\newcommand*{\longeqn}[2][M]{\lefteqn{#2}\phantom{#1}}
\newcommand*{\rulename}[1]{\textrm{#1}}
\newcommand*{\logic}[1]{\mbox{\textbf{\textrm{#1}}}}
\newcommand*{\boxin}[3]%
{\mathopx{\mathtt{box}}{\langle#1\rangle}\mathbin{\mathtt{be}}%
{\langle#2\rangle}\mathrel{\mathtt{in}}{#3}}
\newcommand*{\letinmodal}[3]{\letin{\modal{#1}}{#2}{#3}}
\newcommand*{\boxsub}[3]%
{\mathopx{\mathtt{box}}{#3}\mathbin{\mathtt{with}}{#2}%
\mathrel{\mathtt{for}}{#1}}
\newcommand*{\unbox}[1]{\mathopx{\mathtt{unbox}}{#1}}
\newcommand*{\reducto}[1][]{\longrightarrow_{#1}}
\newcommand*{\ureducto}[1][]{\vphantom{\beta}_{/}\!\!\!\longrightarrow_{#1}}
\newcommand*{\dreducto}[1][]{\vphantom{I}^{\backslash}\!\!\!\longrightarrow_{#1}}
\newcommand{\arrow}{\mathord{\imply}}
\newcommand{\tid}{\ctext{id}}
\newcommand*{\betar}[2][]{\beta^{\rulename{#1}}_{#2}}
\newcommand*{\etar}[2][]{\eta^{\rulename{#1}}_{#2}}
\newcommand*{\idr}[1]{\tid_{#1}}
\newcommand*{\cps}[1]{\eval{#1}}
\newcommand*{\cpsv}[1]{\overline{#1}}
\newcommand*{\tcps}[1]{\overline{#1}}
\newcommand*{\cpsx}[1]{\eval{#1}'}
\newcommand*{\tcpsx}[1]{\overline{#1}'}
\newcommand*{\mcps}[1]{\varPhi(#1)}
\newcommand*{\icps}[1]{\varPsi(#1)}
\newcommand{\return}{\const{R}}
\newcommand*{\ceil}[1]{\lceil#1\rceil}
\newcommand*{\floor}[1]{\lfloor#1\rfloor}
\newcommand*{\ceilx}[1]{\langle\!|#1|\!\rangle}
\newcommand*{\floorx}[1]{(\!|#1|\!)}
\title{Calculi for Intuitionistic Normal Modal Logic
\footnote{This paper was reported at PPL 2007.
The results of this paper have already been published as
\texttt{DOI:10.11309/jssst.25.1\_167} written in Japanese.}}
\author{Yoshihiko Kakutani}
\institute{Department of Information Science, University of Tokyo \\
\email{kakutani@is.s.u-tokyo.ac.jp}}
\begin{document}

\maketitle

\begin{abstract}
 This paper provides a call-by-name and a call-by-value term calculus,
 both of which have a Curry-Howard correspondence to
 the box fragment of the intuitionistic modal logic \logic{IK}\@.
 The strong normalizability and the confluency of the calculi are shown.
 Moreover, we define a CPS transformation from
 the call-by-value calculus to the call-by-name calculus,
 and show its soundness and completeness.
\end{abstract}

\section{Introduction}\label{SS:intro}

It is well-known that the intuitionistic propositional logic
exactly corresponds to the simply typed $\lambda$-calculus:
formulae as types and proofs as terms.
Such a correspondence is called a Curry-Howard correspondence
after \cite{Howard80:FTNC}\@.
A Curry-Howard correspondence enables us to study
an equality on proofs of a logic computationally.
Though Curry-Howard correspondences for
higher-order and predicate logics were provided in \cite{Barendregt92:LCT},
we investigate only propositional logics in this paper.
The aim of this study is to give a proper calculus
that have a Curry-Howard correspondence with a modal logic.

Modal logics have a long history and are now widely studied
both theoretically and practically.
Especially, studies about Kripke semantics~\cite{Kripke63:SAMLINPL}
of modal logics are quite active.
Curry-Howard correspondences of modal logics are, however,
less studied except for linear logics~\cite{Girard87:LL}\@.
(In fact, exponentials of linear logics are a kind of \logic{S4} modality.)
Since \logic{K} is known to be the simplest modal logic,
first we focus the intuitionistic modal logic \logic{IK}\@.
A difficulty of a calculus for \logic{K} is
lack of acknowledged models.
Because a model of the modality in call-by-name \logic{S4}
is acknowledged as a monoidal comonad,
a model of the modality in \logic{K} should be
a generalization of a monoidal comonad.
This paper defines a call-by-name calculus,
which is called the $\lambda\modal$-calculus,
based on a categorical model proposed
by Bellin et al.\ in \cite{BellinPaiva01:ECHCBCML}\@.
Another difficulty is a problem about natural deductions of modal logics
pointed out in \cite{Prawitz65:ND}\@.
A solution of the problem in \logic{IS4} is found
in \cite{Barber96:DILL},
but it cannot be applied to \logic{IK}\@.
The formulation of \cite{BellinPaiva01:ECHCBCML} and this paper
is a natural deduction style, and solves this problem.

On the other hand, studies on Curry-Howard correspondences
for modal logics, especially \logic{IS4}, are applied to
staged computations and information flow analysis
(e.g., \cite{DaviesPfenning01:MASC}, \cite{MiyamotoIgarashi04:MFSIF})
in the field of programming languages.
Since our $\lambda\modal$-calculus can be extended easily to
\logic{IT}, \logic{IK4}, \logic{IS4}, and so on,
this work is expected to contribute such programming language matters.

This paper provides not only a call-by-name calculus
but also a call-by-value one.
A call-by-value calculus is usually defined by a CPS transform,
which is originally introduced
by \cite{Fischer72:LCS} and \cite{Plotkin75:CNCVLC};
for example, a call-by-value control operator
is defined by a CPS transform in \cite{FelleisenFriedman87:STSC}\@.
In \cite{SabryFelleisen93:RPCPS}, Sabry and Felleisen showed that
the $\lambda_{\textrm{c}}$-calculus~\cite{Moggi89:CLCM} is
sound and complete for CPS semantics.
We give the call-by-value $\lambda\modal$-calculus as
an extension of the $\lambda_{\textrm{c}}$-calculus.
Moreover, we define a CPS transformation from
the call-by-value $\lambda\modal$-calculus
to the call-by-name $\lambda\modal$-calculus.
The soundness and completeness for the CPS semantics
are shown along the line of \cite{SabryFelleisen93:RPCPS}\@.

\section{Call-by-Name Calculus}\label{SS:cbn}

First, we remark special notations used in this paper.
We use a notation ``$\vect{M}$'' for
a sequence of meta-variables ``$M_1,\ldots,M_n$''
including the empty sequence.
Hence, an expression ``$\vect{M},\vect{N}$'' stands for
the concatenation of $\vect{M}$ and $\vect{N}$\@.
For a unary operator $\Phi(\blank)$,
we write ``$\Phi(\vect{M})$'' for
the sequence ``$\Phi(M_1),\ldots,\Phi(M_n)$''\@.
We use also ``$\vect{N}(\labst{\vect{x}}{M})$''
as an abbreviation for
``$N_1(\labst{x_1}{\cdots N_n(\labst{x_n}{M})\cdots})$''\@.

A hole in a context is represented by ``$\blank$'' in this paper.
For a context $C$, ``$C[M]$'' denotes the result of
filling holes in $C$ with $M$ as usual.

\begin{definition}
 Types $\sigma$ and terms $M$ of
 the call-by-name $\lambda\modal$-calculus are defined as follows:
 \begin{align*}
  \sigma &\bnfeq p \bnfor \sigma\imply\sigma
  \bnfor \modal{\sigma} ,\\
  M &\bnfeq c \bnfor x \bnfor \labst{\uptype{\sigma}x}{M} \bnfor MM
  \bnfor \boxin{\uptype{\sigma}x,\ldots,\uptype{\sigma}x}{M,\ldots,M}{M} ,
 \end{align*}
 where $p$, $c$, and $x$ range over type constants,
 constants, and variables, respectively.
 Free variables of\/ $\boxin{\vect{x}}{\vect{N}}{M}$
 are free variables of $\vect{N}$\@.
 The typing rules are given in Figure~\ref{FIG:type}\@.
 The reduction rules are given in Figure~\ref{FIG:cbn}\@.
 Define $\const{n}$ as the set
 $\{\betar{\arrow},\etar{\arrow},\idr{\modal},\betar{\modal}\}$\@.
\end{definition}

\begin{figure}[t]
 \infrule{}{\varGamma \prove \uptype{\tau}c \type{\tau}}
 \infrule{}{\varGamma,\var{x}{\tau},\varGamma' \prove x \type{\tau}}
 \infrule{\varGamma,\var{x}{\sigma} \prove M \type{\tau}}
 {\varGamma \prove \labst{\uptype{\sigma}x}{M} \type{\sigma\imply \tau}}
 \infrule{\varGamma \prove M \type{\sigma\imply\tau}
  \andalso \varGamma \prove N \type{\sigma}}
 {\varGamma \prove MN \type{\tau}}
 \infrule
 {\var{x_1}{\sigma_1},\ldots,\var{x_n}{\sigma_n} \prove M \type{\tau}
  \andalso \varGamma \prove N_1 \type{\modal{\sigma_1}}
  \andalso\cdots\andalso \varGamma \prove N_n \type{\modal{\sigma_n}}}
 {\varGamma \prove \boxin{
   \uptype{\sigma_1}x_1,\ldots,\uptype{\sigma_n}x_n}{N_1,\ldots,N_n}{M}
  \type{\modal{\tau}}}
 \caption{Typing rules of $\lambda\modal$-calculus}
 \label{FIG:type}
\end{figure}

\begin{figure}[t]
 \begin{align*}
  &(\labst{x}{M})N \reducto[\betar{\arrow}] M\subst{x}{N} \\
  &\labst{x}{Mx} \reducto[\etar{\arrow}] M &&x \not\in \freev{M} \\
  &\boxin{x}{M}{x} \reducto[\idr{\modal}] M \\
  &\boxin{\vect{w},x,\vect{z}}
  {\vect{P},\boxin{\vect{y}}{\vect{L}}{N},\vect{Q}}{M} \\
  &\linecont \reducto[\betar{\modal}]
  \boxin{\vect{w},\vect{y},\vect{z}}{\vect{P},\vect{L},\vect{Q}}
  {M\subst{x}{N}}
  &&\norm{\vect{w}} = \norm{\vect{P}}
 \end{align*}
 \caption{Call-by-name reductions of $\lambda\modal$-calculus}
 \label{FIG:cbn}
\end{figure}

We remark that all free variables of $M$ are
included by $\{\vect{x}\}$
if $\boxin{\vect{x}}{\vect{N}}{M}$ is typable.

The $\lambda\modal$-calculus has essentially the same syntax
as \cite{BellinPaiva01:ECHCBCML}\@.
Hence, one can see that
our calculus corresponds to the intuitionistic modal logic.
The calculus can be regarded as a natural deduction
by forgetting terms.
Our logic is equivalent to the $\arrow\modal$-fragment
of the usual intuitionistic modal logic \logic{IK}
with respect to provability.
Let \logic{IK} be an intuitionistic Hilbert system
with the axiom
$\modal(\sigma\imply\tau)\imply\modal{\sigma}\imply\modal{\tau}$
and the box inference rule.
The axiom is validated in our calculus as the term
\begin{gather*}
 \prove \labst{f}\labst{x}{\boxin{f',x'}{f,x}{f'x'}}
 \type{\modal(\sigma\imply\tau)\imply\modal{\sigma}\imply\modal{\tau}} .
\end{gather*}
The box rule is simulated as
\begin{gather*}
 \AxiomC{$\prove M \type{\tau}$}
 \UnaryInfC{$\prove \boxin{}{}{M} \type{\modal{\tau}}$}
 \bottomAlignProof
 \DisplayProof .
\end{gather*}
Conversely, the typing rule of the $\lambda\modal$-calculus
is simulated by \logic{IK}:
\begin{gather*}
 \AxiomC{$\sigma_1,\ldots,\sigma_n \prove \tau$}
 \UnaryInfC{$\prove \sigma_1\imply\cdots\imply\sigma_n\imply\tau$}
 \UnaryInfC{$\prove \modal(\sigma_1\imply\cdots\imply\sigma_n\imply\tau)$}
 \UnaryInfC{$\varGamma \prove
 \modal(\sigma_1\imply\cdots\imply\sigma_n\imply\tau)$}
 \UnaryInfC{$\varGamma \prove \modal{\sigma_1}\imply
 \modal(\sigma_2\imply\cdots\imply\sigma_n\imply\tau)$}
 \AxiomC{$\varGamma \prove \modal{\sigma_1}$}
 \BinaryInfC{$\varGamma \prove
 \modal(\sigma_2\imply\cdots\imply\sigma_n\imply\tau)$}
 \UnaryInfC{$\vdots$}
 \UnaryInfC{$\varGamma \prove \modal(\sigma_n\imply\tau)$}
 \UnaryInfC{$\varGamma \prove \modal{\sigma_n}\imply\modal{\tau}$}
 \AxiomC{$\varGamma \prove \modal{\sigma_n}$}
 \BinaryInfC{$\varGamma \prove \modal{\tau}$}
 \bottomAlignProof
 \DisplayProof .
\end{gather*}

According to this encoding, it is not trivial
whether an exchange rule commutes with a box operation.
Hence, we distinguish two terms,
$\boxin{x,y}{N,L}{M}$ and $\boxin{y,x}{L,N}{M}$,
in the $\lambda\modal$-calculus,
although it is common to consider proofs up to exchanges.
Commutativity with exchanges requires another axiom,
\emph{symmetricity}, given in Section~\ref{SS:S4}\@.

\begin{remark}
 The typing rules of our calculus are the same as
 those of Bellin et al.'s~\cite{BellinPaiva01:ECHCBCML},
 but reductions are essentially different.
 In \cite{BellinPaiva01:ECHCBCML}, they addresses
 natural deduction style formulation and categorical semantics,
 but not a term calculus itself,
 so their calculus has room for improvement.
 Differences between Bellin et al.'s calculus and
 our $\lambda\modal$-calculus are the following.
 \begin{itemize}
  \item The first reduction of their calculus corresponds to
	a special case of $\reducto[\betar{\modal}]$\@.
  \item The direction of the first reduction is opposite
	to $\reducto[\betar{\modal}]$:
	our reduction merges adjacent two boxes into one box,
	while their reduction splits a box into two boxes.
  \item The second reduction of their calculus cannot be applied
	to any typable term.
  \item Their calculus does not have a reduction
	corresponding to $\reducto[\idr{\modal}]$\@.
 \end{itemize}
 Though Bellin et al.'s calculus does not have
 the semantic completeness,
 it has syntax for a diamond property.
 Intuitionistic characterization of a diamond property
 is not obvious,
 but the author has observed a diamond property
 in the classical modal logic \logic{K}
 in \cite{Kakutani07:CNCVNML}\@.
\end{remark}

For a set of labels $X$, we write $\reducto[X]$ as
a reduction whose label is a member of $X$,
and $\eql[X]$ as the reflexive transitive symmetric closure
of $\reducto[X]$\@.
We also use $\equiv$ for the $\alpha$-equivalence.

We can easily check the subject reduction theorem for this calculus.

\begin{proposition}
 If $\varGamma \prove M \type{\tau}$ and $M \reducto[\const{n}] N$ hold,
 then $\varGamma \prove N \type{\tau}$ holds.
\end{proposition}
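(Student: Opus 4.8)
The plan is to prove the subject reduction theorem by induction on the structure of $M$, with a case analysis on which reduction rule is applied and where the redex occurs. The standard outline: first establish the substitution lemma, namely that if $\varGamma,\var{x}{\sigma} \prove M \type{\tau}$ and $\varGamma \prove N \type{\sigma}$, then $\varGamma \prove M\subst{x}{N} \type{\tau}$; this is itself an easy induction on $M$, using weakening (which holds because the variable rule allows an arbitrary context on both sides of the distinguished hypothesis). Since every reduction rule can be applied under an arbitrary context, the inductive step for compound terms is routine: if $M \reducto[\const{n}] N$ by contracting a proper subterm, one appeals to the induction hypothesis on that subterm and reassembles the typing derivation with the same rule. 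So the real content is checking each of the four reduction rules at the top level.

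For $\betar{\arrow}$, the redex $(\labst{x}{M})N$ has type $\tau$ only via the $\imply$-elimination rule applied to $\varGamma \prove \labst{x}{M} \type{\sigma\imply\tau}$ and $\varGamma \prove N \type{\sigma}$; inverting the $\imply$-introduction rule on the former gives $\varGamma,\var{x}{\sigma} \prove M \type{\tau}$, and the substitution lemma then yields $\varGamma \prove M\subst{x}{N} \type{\tau}$. For $\etar{\arrow}$, if $\varGamma \prove \labst{x}{Mx} \type{\sigma\imply\tau}$ then inverting gives $\varGamma,\var{x}{\sigma} \prove Mx \type{\tau}$, hence $\varGamma,\var{x}{\sigma} \prove M \type{\sigma\imply\tau}$; since $x \not\in \freev{M}$, a strengthening observation (the derivation of $M$'s type does not use the hypothesis $\var{x}{\sigma}$) gives $\varGamma \prove M \type{\sigma\imply\tau}$. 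For $\idr{\modal}$, if $\varGamma \prove \boxin{\uptype{\sigma}x}{N}{x} \type{\modal{\tau}}$, then by the box rule $\tau = \sigma$, $\var{x}{\sigma} \prove x \type{\sigma}$ (which always holds), and $\varGamma \prove N \type{\modal{\sigma}}$, which is exactly what is required.

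The main obstacle is the $\betar{\modal}$ rule, where a box nested inside another box is merged. Here one must carefully track the index bookkeeping enforced by the side condition $\norm{\vect{w}} = \norm{\vect{P}}$. Suppose $\varGamma \prove \boxin{\vect{w},x,\vect{z}}{\vect{P},\boxin{\vect{y}}{\vect{L}}{N},\vect{Q}}{M} \type{\modal{\tau}}$. Inverting the outer box rule gives a context $\vect{w}\!:\!\vect{\rho},\var{x}{\pi},\vect{z}\!:\!\vect{\xi} \prove M \type{\tau}$, together with $\varGamma \prove \vect{P} \type{\modal{\vect{\rho}}}$, $\varGamma \prove \boxin{\vect{y}}{\vect{L}}{N} \type{\modal{\pi}}$, and $\varGamma \prove \vect{Q} \type{\modal{\vect{\xi}}}$. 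Inverting the box rule on the middle premise gives $\vect{y}\!:\!\vect{\theta} \prove N \type{\pi}$ and $\varGamma \prove \vect{L} \type{\modal{\vect{\theta}}}$. Now the substitution lemma — applied in the context $\vect{w}\!:\!\vect{\rho},\vect{y}\!:\!\vect{\theta},\vect{z}\!:\!\vect{\xi}$ after weakening the derivation of $N$ to add $\vect{w}\!:\!\vect{\rho}$ and $\vect{z}\!:\!\vect{\xi}$, and the derivation of $M$ to insert $\vect{y}\!:\!\vect{\theta}$ in place of $\var{x}{\pi}$ — yields $\vect{w}\!:\!\vect{\rho},\vect{y}\!:\!\vect{\theta},\vect{z}\!:\!\vect{\xi} \prove M\subst{x}{N} \type{\tau}$. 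Reassembling with the box rule using the premises $\varGamma \prove \vect{P},\vect{L},\vect{Q}$ against the context $\vect{\rho},\vect{\theta},\vect{\xi}$ gives $\varGamma \prove \boxin{\vect{w},\vect{y},\vect{z}}{\vect{P},\vect{L},\vect{Q}}{M\subst{x}{N}} \type{\modal{\tau}}$, as required; the side conditions on both the original redex and the contractum line up the positions so that the reconstructed term is exactly the right-hand side of the rule. The only delicate points are the free-variable discipline on boxes (so that weakening the derivation of $N$ does not clash with $\vect{y}$) and making the $\alpha$-renaming explicit, but these are routine.
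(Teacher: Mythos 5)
Your proof is correct; the paper itself offers no proof of this proposition, merely remarking that subject reduction is easily checked. Your argument --- substitution lemma, inversion of the typing rules, and the careful bookkeeping for the $\betar{\modal}$ case where the inner box's context $\vect{y}$ replaces the single hypothesis $\var{x}{\pi}$ in the body's context --- is exactly the standard development the paper leaves implicit, and it handles correctly the one non-routine feature of this calculus, namely that the body of a box is typed in a context consisting solely of the box-bound variables.
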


Other important properties, the strong normalizability and
the confluency, also hold.

\begin{proposition}
 The call-by-name $\lambda\modal$-calculus is strongly normalizable
 with respect to $\reducto[\const{n}]$\@.
\end{proposition}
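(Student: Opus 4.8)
The plan is to prove strong normalizability by the standard reducibility (computability) method, adapted to the modal connective. First I would define, by induction on types, a family of sets $\mathrm{Red}_\sigma$ of strongly normalizing terms: $\mathrm{Red}_p$ is the set of all strongly normalizing terms of type $p$; $\mathrm{Red}_{\sigma\imply\tau}$ is the set of terms $M$ such that $MN \in \mathrm{Red}_\tau$ for every $N \in \mathrm{Red}_\sigma$; and $\mathrm{Red}_{\modal\sigma}$ is the set of terms $M$ such that, whenever $M$ occurs as one of the $N_i$ in a box expression $\boxin{\vect{x}}{\ldots,M,\ldots}{P}$ whose body and other arguments are already reducible, the whole box reduces only to reducible-and-strongly-normalizing terms — concretely, one takes $\mathrm{Red}_{\modal\sigma}$ to be generated by the strongly normalizing terms together with all $\boxin{\vect{x}}{\vect{N}}{M}$ with each $N_i \in \mathrm{Red}_{\modal\sigma_i}$ and $M$ strongly normalizing. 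The usual CR1–CR3 properties (reducibility implies SN; reduction preserves reducibility; a "neutral" term all of whose one-step reducts are reducible is itself reducible) would be verified simultaneously by induction on types.

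The core of the argument is the adequacy lemma: if $\var{x_1}{\sigma_1},\ldots,\var{x_k}{\sigma_k} \prove M \type\tau$ and $N_i \in \mathrm{Red}_{\sigma_i}$ for each $i$, then $M\subst{x_1}{N_1}\cdots\subst{x_k}{N_k} \in \mathrm{Red}_\tau$. This goes by induction on the derivation of $M$. The cases for variables, constants, abstraction, and application are exactly as in the classical proof for simply typed $\lambda$-calculus. The new case is the box rule. Here I would need a separate lemma handling the two modal reductions: given that the body $M$ (under its substitution) is SN and each argument $N_i$ is in $\mathrm{Red}_{\modal\sigma_i}$, the term $\boxin{\vect{x}}{\vect{N}}{M}$ is SN. Since $\idr{\modal}$ strictly shrinks the term and $\betar{\modal}$ merges two adjacent boxes — decreasing the number of $\modal$-introduction nodes while substituting an already-SN body into an already-SN body — one argues by a lexicographic measure (e.g. on the multiset of sizes of the $N_i$ together with the size of $M$, or on the total number of nested boxes) that no infinite reduction sequence starting from the box can exist, using the induction hypothesis that each constituent is SN and reducible. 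Then $\operatorname{SN}$ of arbitrary typable $M$ follows by instantiating the adequacy lemma with $N_i \equiv x_i$, which lies in $\mathrm{Red}_{\sigma_i}$ by CR3 applied to the variable.

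The main obstacle is the $\betar{\modal}$ case, and in particular its interaction with the $\vect{w}$/$\vect{z}$ splitting and with nested boxes. Because $\betar{\modal}$ substitutes the inner box body $N$ for a variable $x$ inside the outer body $M$, and $x$ may occur under further $\modal$-introductions inside $M$, a naive "size of the body" measure is not obviously decreasing; one must choose the termination measure for the box-SN lemma with care — most likely counting the number of $\boxin{}{}{}$ constructors, which $\betar{\modal}$ strictly decreases and $\idr{\modal}$ does not increase, with the sizes of the already-SN subterms handled by an inner induction. A secondary subtlety is that $\mathrm{Red}_{\modal\sigma}$ must be closed under the reductions reachable from a box (CR2 for the modal type), which forces the definition of $\mathrm{Red}_{\modal\sigma}$ to be closed under $\betar{\modal}$-contraction of its generators; verifying this closure, together with CR3 for $\modal$, is where the bookkeeping concentrates. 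Everything else is routine adaptation of Tait–Girard reducibility.
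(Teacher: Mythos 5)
Your strategy---Tait--Girard reducibility with a candidate for $\modal\sigma$---is a genuinely different route from the paper, which proves this proposition by a reduction-preserving translation $\ceil{\blank}$ into the simply typed $\lambda$-calculus: $\modal\tau$ is read as $(\ceil{\tau}\imply p)\imply p$ and $\boxin{\vect{x}}{\vect{N}}{M}$ as $\labst{k}{\ceil{\vect{N}}(\labst{\vect{x}}{k\ceil{M}})}$, so every $\idr{\modal}$ step becomes at least one $\etar{\arrow}$ step, every $\betar{\modal}$ step at least one $\betar{\arrow}$ step, and SN is inherited from the simply typed calculus. As written, however, your proposal has a genuine gap, concentrated exactly where you place the ``main obstacle''. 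First, the candidate $\mathrm{Red}_{\modal\sigma}$ you describe (``generated by the strongly normalizing terms together with all boxes with reducible arguments and SN body'') collapses, once CR1 is imposed, to the set of all SN terms of type $\modal\sigma$; it carries no information beyond SN and therefore cannot do any work in the box case of the adequacy lemma. Second, the termination measure you propose for that case is incorrect: a $\betar{\modal}$ step replaces the outer body by $M\subst{x}{N}$, and since $x$ may occur several times in $M$ and $N$ may itself contain $\mathtt{box}$ constructors, the number of $\mathtt{box}$ nodes can strictly \emph{increase} (only the one merged constructor disappears, while every box inside $N$ is duplicated at each occurrence of $x$).

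More fundamentally, even when the box count does decrease, the substitution $M\subst{x}{N}$ can create fresh $\betar{\arrow}$-redexes of arbitrary type (take $x$ in function position in $M$ and $N$ a $\lambda$-abstraction), so ``SN body substituted into SN body'' cannot be discharged by any size, count, or lexicographic measure on the constituents---this is precisely the phenomenon that makes SN of the simply typed $\lambda$-calculus itself resistant to size inductions and forces logical relations. To repair the direct approach you would need a candidate at $\modal\tau$ whose membership condition for $\boxin{\vect{x}}{\vect{N}}{M}$ quantifies over reducible instantiations of the body obtained by unboxing reducible arguments, and you would have to verify CR1--CR3 for that definition; this is doable but amounts to redoing reducibility for the image of the paper's translation. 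Given that the paper's short embedding into the simply typed $\lambda$-calculus obtains the result by quoting a known SN theorem, I would either carry out that candidate construction in full detail or switch to the translation argument.
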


\begin{proof}
 Define the transformation $\ceil{\blank}$
 to the simply typed $\lambda$-calculus by
 \begin{align*}
  \ceil{\boxin{\vect{x}}{\vect{N}}{M}} =
  &\labst{k}{\ceil{\vect{N}}(\labst{\vect{x}}{k\ceil{M}})} .
 \end{align*}
 Then, $M \type{\tau}$ implies $\ceil{M} \type{\ceil{\tau}}$
 if we define the type transformation $\ceil{\blank}$
 by $\ceil{\modal{\tau}} = (\ceil{\tau}\imply p)\imply p$\@.
 One can see that
 $\ceil{\boxin{x}{M}{x}} \reducto[\etar{\arrow}]^+ \ceil{M}$ and
 \begin{align*}
  \longeqn{\ceil{\boxin{\vect{w},x,\vect{z}}
  {\vect{P},\boxin{\vect{y}}{\vect{L}}{N},\vect{Q}}{M}}} \\
  &\equiv \labst{k}{\ceil{\vect{P}}(\labst{\vect{w}}
  {(\labst{h}{\ceil{\vect{L}}(\labst{\vect{y}}{h\ceil{N}})})(\labst{x}
  {\ceil{\vect{Q}}(\labst{\vect{z}}{k\ceil{M}})})})} \\
  &\reducto[\betar{\arrow}] \labst{k}{\ceil{\vect{P}}(\labst{\vect{w}}
  {\ceil{\vect{L}}(\labst{\vect{y}}
  {(\labst{x}{\ceil{\vect{Q}}(\labst{\vect{z}}{k\ceil{M}})})
  \ceil{N}})})} \\
  &\reducto[\betar{\arrow}] \labst{k}{\ceil{\vect{P}}(\labst{\vect{w}}
  {\ceil{\vect{L}}(\labst{\vect{y}}{\ceil{\vect{Q}}(\labst{\vect{z}}
  {k\ceil{M}\subst{x}{\ceil{N}}})})})} \\
  &\equiv \labst{k}{\ceil{\vect{P}}(\labst{\vect{w}}
  {\ceil{\vect{L}}(\labst{\vect{y}}{\ceil{\vect{Q}}(\labst{\vect{z}}
  {k\ceil{M\subst{x}{N}}})})})} \\
  &\equiv \ceil{\boxin{\vect{w},\vect{y},\vect{z}}
  {\vect{P},\vect{L},\vect{Q}}{M\subst{x}{N}}}
 \end{align*}
 hold.
 Because the simply typed $\lambda$-calculus is SN
 w.r.t.\ $\reducto[\betar{\arrow},\etar{\arrow}]$
 (e.g., q.v.\ \cite{GirardTaylor89:PT}),
 the call-by-name $\lambda\modal$-calculus is SN\@.
\end{proof}

We note here that the strong normalization theorem was proved
via a different calculus in \cite{Abe07:CMPFOPL}\@.

\begin{proposition}
 $\reducto[\const{n}]$ is confluent.
\end{proposition}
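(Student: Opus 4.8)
The plan is to invoke Newman's Lemma: since we already know that $\reducto[\const{n}]$ is strongly normalizable (by the preceding proposition), it suffices to prove local confluence, i.e. that every critical pair can be joined. So the real work is a case analysis on pairs of redexes that overlap.

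First I would dispose of the disjoint and nested-but-non-overlapping cases by the standard argument: if the two contracted redexes occupy disjoint positions, they commute trivially, and if one lies strictly inside an argument/subterm of the other, the outer rule simply carries the inner reduction along (for $\betar{\arrow}$ one uses that substitution commutes with reduction; for $\betar{\modal}$ and $\idr{\modal}$ one checks that reducing inside some $N_i$, $M$, or $L$ on one side matches a reduction on the other). The genuinely interesting overlaps are the ones internal to the modal fragment. The pure $\lambda$-calculus critical pair between $\betar{\arrow}$ and $\etar{\arrow}$ is classical and joinable, so I would only remark on it.

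The key cases are the self-overlaps and mutual overlaps of $\idr{\modal}$ and $\betar{\modal}$. (i) Two instances of $\betar{\modal}$ acting on the same outer box: here the outer box has two (or more) of its arguments already of the form $\boxin{\cdots}{\cdots}{\cdot}$, and one must check that merging them in either order yields the same term — this reduces to associativity/commutativity of concatenating the label-and-term lists $\vect{P},\vect{L},\vect{Q}$ and to the substitutions $\subst{x}{N}$ landing on disjoint variables, so the side condition $\norm{\vect{w}}=\norm{\vect{P}}$ is what keeps the bookkeeping consistent. (ii) A $\betar{\modal}$ whose inner box $\boxin{\vect{y}}{\vect{L}}{N}$ is itself a $\betar{\modal}$- or $\idr{\modal}$-redex, versus contracting that inner redex first: one shows both routes reach the fully-merged box. (iii) Overlaps with $\idr{\modal}$: if the outer box of a $\betar{\modal}$ step is simultaneously of the shape $\boxin{x}{M'}{x}$, or if the inner box $N$ being substituted is an $\idr{\modal}$-redex; in each subcase a short computation closes the diamond. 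I would also note the degenerate overlap where $\idr{\modal}$ applies after a $\betar{\modal}$ merge that leaves a box of the form $\boxin{x}{M}{x}$.

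I expect the main obstacle to be case (i): with three or more modal arguments simultaneously being boxes, the reordering of the concatenated sequences $\vect{P},\vect{L},\vect{Q}$ and the simultaneous substitutions must be tracked carefully, and the side condition aligning $\norm{\vect{w}}$ with $\norm{\vect{P}}$ is essential for the two merge orders to produce literally the same term (up to $\equiv$). Everything else is routine diagram-chasing. Once local confluence is established, Newman's Lemma together with strong normalizability of $\reducto[\const{n}]$ gives confluence.
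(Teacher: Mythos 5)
Your proposal is correct and follows essentially the same route as the paper: Newman's lemma combined with the strong normalizability established earlier reduces the problem to local confluence, which is then checked on the critical pairs arising from overlaps of $\idr{\modal}$ and $\betar{\modal}$ (your cases (i)--(iii) correspond exactly to the four modal critical pairs the paper enumerates, with the non-overlapping and pure $\lambda$-calculus cases dismissed as standard in both treatments). The paper likewise leaves the joinability computations as routine, so no substantive difference remains.
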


\begin{proof}
 By Newman's lemma~\cite{Newman42:TCDE},
 it is sufficient to check the local confluency.
 The call-by-name $\lambda\modal$-calculus has essentially
 four kinds of critical pairs other than
 pairs of the $\lambda$-calculus:
 \begin{align*}
  &\linecont \ureducto[\idr{\modal}]
  \boxin{\vect{y}}{\vect{L}}{N} \\
  &\boxin{x}{\boxin{\vect{y}}{\vect{L}}{N}}{x} \\
  &\linecont \dreducto[\betar{\modal}]
  \boxin{\vect{y}}{\vect{L}}{N} ,\\
  &\linecont \ureducto[\idr{\modal}] \boxin{x}{N}{M} \\
  &\boxin{x}{\boxin{y}{N}{y}}{M} \\
  &\linecont \dreducto[\betar{\modal}]
  \boxin{y}{N}{M\subst{x}{y}} ,\\
  &\linecont \ureducto[\betar{\modal}]
  \boxin{y}{\boxin{\vect{z}}{\vect{P}}{L}}{M\subst{x}{N}} \\
  &\boxin{x}{\boxin{y}{\boxin{\vect{z}}{\vect{P}}{L}}{N}}{M} \\
  &\linecont \dreducto[\betar{\modal}]
  \boxin{x}{\boxin{\vect{z}}{\vect{P}}{N\subst{y}{L}}}{M} , \\
  &\linecont \ureducto[\betar{\modal}]
  \boxin{\vect{y},x'}{\vect{L},
  \boxin{\vect{y'}}{\vect{L'}}{N'}}{M\subst{x}{N}} \\
  &\boxin{x,x'}{\boxin{\vect{y}}{\vect{L}}{N},
  \boxin{\vect{y'}}{\vect{L'}}{N'}}{M} \\
  &\linecont \dreducto[\betar{\modal}]
  \boxin{x,\vect{y'}}{\boxin{\vect{y}}{\vect{L}}{N},
  \vect{L'}}{M\subst{x'}{N'}} .
 \end{align*}
 It is easily shown that all the pairs are joinable.
\end{proof}

Last, we mention the subformula property of this calculus.

\begin{theorem}
 A normal form in the call-by-name $\lambda\modal$-calculus
 has the subformula property.
\end{theorem}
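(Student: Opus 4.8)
The plan is to first characterise the $\reducto[\const{n}]$-normal forms grammatically and then prove the subformula property by induction on typing derivations, using a strengthened statement for \emph{neutral} normal forms, i.e.\ those whose head is a variable or a constant. Concretely, I would introduce two mutually recursive classes: neutral normal forms $R \bnfeq x \bnfor \uptype{\sigma}c \bnfor R\,M$ and normal forms $M \bnfeq R \bnfor \labst{\uptype{\sigma}x}{M} \bnfor \boxin{\vect{x}}{\vect{R}}{M}$, subject to the restrictions that $\labst{\uptype{\sigma}x}{Rx}$ with $x\notin\freev{R}$ and $\boxin{\uptype{\sigma}x}{R}{x}$ are excluded (these rule out the $\etar{\arrow}$- and $\idr{\modal}$-redexes; in fact only $\betar{\arrow}$- and $\betar{\modal}$-normality will be used below). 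The one non-routine point in this characterisation is that the arguments of a box in a typable normal form must themselves be neutral normal forms: since reductions may be performed in any subterm, every subterm of a normal form is again normal; an argument $N_i$ of a box has modal type $\modal{\sigma_i}$, hence is not a $\lambda$-abstraction; it is not itself a box, for otherwise the whole term would be a $\betar{\modal}$-redex; and if it were an application, its head would have to be a variable or a constant, since an abstraction in head position would give a $\betar{\arrow}$-redex; so $N_i$ is neutral.

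Next I would prove simultaneously, by induction on the typing derivation: (a) if $\varGamma \prove R \type{\tau}$ with $R$ a neutral normal form, then $\tau$ --- and in fact every type occurring in the derivation --- is a subformula of some type in $\varGamma$ or of the annotation of a constant occurring in $R$; and (b) if $\varGamma \prove M \type{\tau}$ with $M$ a normal form, then every type occurring in the derivation is a subformula of $\tau$, of some type in $\varGamma$, or of a constant annotation occurring in $M$. Part (a) is the standard spine argument: for $R \equiv R'M'$ with $R' \type{\rho\imply\tau}$, the hypothesis on $R'$ shows that $\rho\imply\tau$ and everything in its derivation has the required form, hence so do the subformulas $\rho$ and $\tau$, and part (b) applied to $M' \type{\rho}$ handles the argument and its derivation. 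In part (b) the variable and constant cases are immediate, and the abstraction case $\labst{\uptype{\sigma}x}{M'} \type{\sigma\imply\tau}$ is exactly as in the simply typed $\lambda$-calculus: the premise is $\varGamma,\var{x}{\sigma} \prove M' \type{\tau}$, and $\sigma$ and $\tau$ are subformulas of $\sigma\imply\tau$; the neutral case of (b) just invokes (a).

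The heart of the argument is the box case of (b). Given $\varGamma \prove \boxin{\uptype{\sigma_1}x_1,\ldots,\uptype{\sigma_n}x_n}{N_1,\ldots,N_n}{M} \type{\modal{\tau}}$, the premises are $\var{x_1}{\sigma_1},\ldots,\var{x_n}{\sigma_n} \prove M \type{\tau}$ and $\varGamma \prove N_i \type{\modal{\sigma_i}}$ for each $i$. By the characterisation each $N_i$ is a neutral normal form, so (a) applies and shows that $\modal{\sigma_i}$ --- and therefore its immediate subformula $\sigma_i$ --- as well as every type in the derivation of $N_i$, is a subformula of a $\varGamma$-type or of a constant annotation. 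The body $M$ is a normal form typed in the context $\var{x_1}{\sigma_1},\ldots,\var{x_n}{\sigma_n}$ with conclusion $\tau$, so by the induction hypothesis (b) every type in its derivation is a subformula of $\tau$ or of some $\sigma_i$, hence a subformula of $\modal{\tau}$ or of a $\varGamma$-type (or of a constant annotation). Assembling these facts establishes the claim for the whole term, and the theorem is the special case of (b) in which $M$ is the given normal form. I expect this box case --- more precisely, the step showing that the premise context $\var{x_1}{\sigma_1},\ldots,\var{x_n}{\sigma_n}$ is dominated by $\varGamma$ because the $N_i$ are neutral --- to be the only genuinely new ingredient; everything else mirrors the simply typed case, and the induction is well-founded since every appeal to (a) or (b) is on a proper subterm.
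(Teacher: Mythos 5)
Your proposal is correct and takes essentially the same route as the paper's proof: both hinge on the observation that the arguments of a box in a normal form must be neutral terms of the form $yL_1\cdots L_m$, after which the standard neutral/normal simultaneous induction from the simply typed $\lambda$-calculus goes through. The paper states this in three lines; you have simply filled in the details of the grammar of normal forms and of the strengthened induction hypothesis.
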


\begin{proof}
 By induction on construction of terms.
 If $\boxin{\vect{x}}{\vect{N}}{M}$ is a normal form,
 then $M$ is a normal form and each $N_i$ has
 a form $yL_1\cdots L_m$\@.
 Therefore, the subformula property holds in this case
 by the induction hypothesis.
 Other cases are just the same as
 the simply typed $\lambda$-calculus.
\end{proof}

A characterization of the $\lambda\modal$-calculus
by a standard translation into the predicate logic
is given by Abe in \cite{Abe07:CMPFOPL}\@.
Since our motivation arises from logics and categorical semantics,
computational meaning of the calculus still remains to be studied.
We believe the following discussions are helpful.

Because the logic \logic{IK} is weaker than the logic \logic{IS4},
the $\lambda\modal$-calculus is expected to be a subcalculus
of a calculus for \logic{IS4}\@.
A method for extending the $\lambda\modal$-calculus
to \logic{IS4} is discussed in Section~\ref{SS:S4}\@.
Through this approach, computational analyses of \logic{IS4} calculi
might be applied to the $\lambda\modal$-calculus.

Another approach to understand computational meaning
of the $\lambda\modal$-calculus
is to investigate a relation to monads.
It is remarkable that the transformation $\ceil{\blank}$
mentioned in the proof of Proposition~2 preserves the equality.
It means that $\modal$ in the $\lambda\modal$-calculus
can be interpreted as a continuation monad in the $\lambda$-calculus.
In fact, such a transformation exists for any strong monad
because a strong monad is a lax monoidal endofunctor.
Hence, we can conclude that the $\lambda\modal$-calculus
includes an abstract setting of strong monads.
In \cite{McBridePaterson07:APE},
McBride and Paterson have studied a structure
abstracting a strong monad.
It must be strongly related to our calculus
though their formulation
has a tensorial strength with respect to cartesian products.

\section{Call-by-Value Calculus}\label{SS:cbv}

\begin{definition}
 Types $\sigma$, terms $M$, values $V$,
 simple evaluation contexts $C$, and evaluation contexts $E$
 of the call-by-value $\lambda\modal$-calculus are defined as follows:
 \begin{align*}
  \sigma &\bnfeq p \bnfor \sigma\imply\sigma
  \bnfor \modal{\sigma} ,\\
  M &\bnfeq c \bnfor x \bnfor \labst{\uptype{\sigma}x}{M} \bnfor MM
  \bnfor \boxin{\uptype{\sigma}x,\ldots,\uptype{\sigma}x}{M,\ldots,M}{M} ,\\
  V &\bnfeq c \bnfor x \bnfor \labst{\uptype{\sigma}x}{M}
  \bnfor \boxin{\uptype{\sigma}x,\ldots,\uptype{\sigma}x}{V,\ldots,V}{M} ,\\
  C &\bnfeq \blank M \bnfor V\blank
  \bnfor \boxin{\uptype{\sigma}x,\ldots,\uptype{\sigma}x}
  {V,\ldots,V,\blank,M,\ldots,M}{M} ,\\
  E &\bnfeq \blank \bnfor C[E] .
 \end{align*}
 The typing rules are just the same as the call-by-name.
 The reduction rules are given in Figure~\ref{FIG:cbv}\@.
 Define $\const{v}$ as the set
 $\{\idr{\arrow},\betar[v]{\arrow},\etar[v]{\arrow},\rulename{lift},
 \rulename{flat},\betar{\Omega},\idr{\modal},\betar[v]{\modal}\}$\@.
\end{definition}

\begin{figure}[t]
 \begin{align*}
  &V,W \type{\rulename{value}} \\
  &C \type{\rulename{simple evaluation context}} \\
  &E \type{\rulename{evaluation context}} \\
  &(\labst{x}{x})M \reducto[\idr{\arrow}] M \\
  &(\labst{x}{M})V \reducto[{\betar[v]{\arrow}}] M\subst{x}{V} \\
  &\labst{x}{Vx} \reducto[{\etar[v]{\arrow}}] V &&x \not\in \freev{V} \\
  &C[(\labst{x}{M})N] \reducto[\rulename{lift}] (\labst{x}{C[M]})N \\
  &C[yM] \reducto[\rulename{flat}] (\labst{x}{C[x]})(yM)
  &&C \neq V\blank \\
  &(\labst{x}{E[yx]})M \reducto[\betar{\Omega}] E[yM]
  && x \not\in \freev{E[y]} \\
  &\boxin{x}{M}{x} \reducto[\idr{\modal}] M \\
  &\boxin{\vect{w},x,\vect{z}}
  {\vect{W},\boxin{\vect{y}}{\vect{N}}{V},\vect{P}}{M} \\
  &\linecont \reducto[{\betar[v]{\modal}}]
  \boxin{\vect{w},\vect{y},\vect{z}}{\vect{W},\vect{N},\vect{P}}
  {M\subst{x}{V}}
  &&\norm{\vect{w}} = \norm{\vect{W}}
 \end{align*}
 \caption{Call-by-value reductions of $\lambda\modal$-calculus}
 \label{FIG:cbv}
\end{figure}

\begin{proposition}
 If $\varGamma \prove M \type{\tau}$ and $M \reducto[\const{v}] N$ hold,
 then $\varGamma \prove N \type{\tau}$ holds.
\end{proposition}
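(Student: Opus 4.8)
The plan is to argue by induction on the derivation of $M \reducto[\const{v}] N$. Since reduction is the compatible closure of the rules of Figure~\ref{FIG:cbv}, this splits into the congruence cases, in which the contracted redex lies strictly inside $M$ and which are dispatched at once by the induction hypothesis together with the typing rule of Figure~\ref{FIG:type} for the enclosing constructor, and one root case for each reduction rule. Before starting I would record the standard auxiliary facts, each by a routine induction: \emph{weakening} (from $\varGamma,\varGamma'' \prove M \type{\tau}$ derive $\varGamma,\varGamma',\varGamma'' \prove M \type{\tau}$ when $\varGamma'$ introduces fresh variables), \emph{strengthening} (from $\varGamma,\var{x}{\sigma} \prove M \type{\tau}$ with $x \notin \freev{M}$ derive $\varGamma \prove M \type{\tau}$), and the \emph{substitution lemma} (from $\varGamma_1,\var{x}{\sigma},\varGamma_2 \prove M \type{\tau}$ and $\varGamma_1 \prove L \type{\sigma}$ derive $\varGamma_1,\varGamma_2 \prove M\subst{x}{L} \type{\tau}$). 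The substitution lemma goes by induction on $M$; its only clause not already present in the simply typed $\lambda$-calculus is that for $\boxin{\vect{z}}{\vect{N}}{M'}$, where one uses that the free variables of a box are exactly those of the list $\vect{N}$, so that the substitution touches only $\vect{N}$ and leaves the scoped body $M'$ unchanged.

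For the purely functional rules $\idr{\arrow}$, $\betar[v]{\arrow}$ and $\etar[v]{\arrow}$ the verifications are exactly as in the simply typed $\lambda$-calculus: invert a derivation of the redex and then apply the substitution lemma (for $\betar[v]{\arrow}$) or strengthening (for $\etar[v]{\arrow}$), or simply read off the types (for $\idr{\arrow}$, where inverting a derivation of $\varGamma \prove (\labst{x}{x})M \type{\tau}$ forces the argument type to coincide with the result type). For the administrative rules $\rulename{lift}$, $\rulename{flat}$ and $\betar{\Omega}$ I would first record a \emph{context typing lemma}: in a simple evaluation context $C$, and hence in an evaluation context $E$, the hole never occurs under a binder, so if $\varGamma \prove C[L] \type{\tau}$ with the hole of $C$ at type $\rho$ then $\varGamma,\varGamma' \prove C[L'] \type{\tau}$ for every $L'$ with $\varGamma,\varGamma' \prove L' \type{\rho}$. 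Given this, $\rulename{lift}$ and $\rulename{flat}$ follow immediately: read off the type $\rho$ of the displaced subterm, use the lemma to retype the relevant instance of $C$ over $\varGamma$ extended by the appropriate fresh hypothesis, and close up with the $\imply$-introduction and $\imply$-elimination rules; the side condition $C \neq V\blank$ of $\rulename{flat}$ is irrelevant here. For $\betar{\Omega}$ one argues similarly, but must additionally invoke the side condition $x \notin \freev{E[y]}$---together with the freshness of the bound variable---and strengthening, to conclude that the reduct $E[yM]$ still types over $\varGamma$ rather than merely over $\varGamma,\var{x}{\sigma}$.

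The two modal rules carry what is new. For $\idr{\modal}$, inverting a derivation of $\varGamma \prove \boxin{\uptype{\sigma}x}{M}{x} \type{\modal{\tau}}$ forces $\sigma = \tau$ and already exhibits $\varGamma \prove M \type{\modal{\tau}}$. The rule $\betar[v]{\modal}$ is the main obstacle, though a mild one: what must be checked is that re-associating the binder zones $\vect{w}$, $x$, $\vect{z}$ of the outer box with the binders $\vect{y}$ of the inner box is type-correct. Concretely, a derivation of $\varGamma \prove \boxin{\vect{w},x,\vect{z}}{\vect{W},\boxin{\vect{y}}{\vect{N}}{V},\vect{P}}{M} \type{\modal{\tau}}$ decomposes, by the box typing rule, into (i) a derivation of $\vect{w},\var{x}{\rho},\vect{z} \prove M \type{\tau}$, where $\rho$ is the declared type of the binder $x$; (ii) derivations over $\varGamma$ of each member of $\vect{W}$, of $\boxin{\vect{y}}{\vect{N}}{V}$, and of each member of $\vect{P}$, all at the appropriate boxed types; and, decomposing the derivation of the inner box obtained in (ii), (iii) a derivation of $\vect{y} \prove V \type{\rho}$ together with derivations over $\varGamma$ of each member of $\vect{N}$ at the appropriate boxed types. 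Now weaken (i) to $\vect{w},\vect{y},\var{x}{\rho},\vect{z} \prove M \type{\tau}$, weaken (iii) to $\vect{w},\vect{y} \prove V \type{\rho}$, and apply the substitution lemma to obtain $\vect{w},\vect{y},\vect{z} \prove M\subst{x}{V} \type{\tau}$; feeding this together with the $\varGamma$-derivations for $\vect{W}$, $\vect{N}$ and $\vect{P}$ collected in (ii) and (iii) back into the box typing rule yields $\varGamma \prove \boxin{\vect{w},\vect{y},\vect{z}}{\vect{W},\vect{N},\vect{P}}{M\subst{x}{V}} \type{\modal{\tau}}$. The side condition $\norm{\vect{w}} = \norm{\vect{W}}$ is precisely what guarantees that the new binder list $\vect{w},\vect{y},\vect{z}$ lines up position by position with the new term list $\vect{W},\vect{N},\vect{P}$, so that this final use of the rule is well-formed.
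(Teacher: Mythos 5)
The paper offers no proof of this proposition (the analogous call-by-name statement is introduced with ``we can easily check the subject reduction theorem,'' and the call-by-value version is likewise left to the reader), and your argument is the standard one it implicitly intends: induction on the reduction with weakening, strengthening, a substitution lemma, and a hole-not-under-a-binder lemma for evaluation contexts. Your treatment of the only genuinely new case, $\betar[v]{\modal}$ --- inverting the two box typings, weakening the body's binder context and the inner value's context by $\vect{y}$, and substituting --- is correct, so the proposal is fine as it stands.
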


Since the definition of terms and the typing rules are the same
as those of the call-by-name calculus,
also the call-by-value $\lambda\modal$-calculus
corresponds to \logic{IK}\@.
In order to define CPS semantics, however,
we restrict terms as follows:
\begin{align*}
 M &\bnfeq c \bnfor x \bnfor \labst{\uptype{\sigma}x}{M} \bnfor MM
 \bnfor \boxin{\uptype{\sigma}x,\ldots,\uptype{\sigma}x}{M,\ldots,M}{V} .
\end{align*}
These terms are closed under call-by-value reductions
because values are closed under substitutions.
Hence, we can say that the full call-by-value calculus is
a conservative extension of the restricted version.
In the rest of this section (and the first half of the next section),
we focus on this restricted calculus.

Our call-by-value $\lambda\modal$-calculus is an extension of
Sabry and Felleisen's calculus in \cite{SabryFelleisen93:RPCPS}\@.
As mentioned in \cite{SabryFelleisen93:RPCPS},
it is equivalent to the $\lambda_{\textrm{c}}$-calculus~\cite{Moggi89:CLCM},
which is acknowledged as a call-by-value language,
with respect to equalities.
We give CPS semantics of the call-by-value $\lambda\modal$-calculus
and show the soundness and completeness
along the line of \cite{SabryFelleisen93:RPCPS}\@.

\begin{definition}
 The CPS transformation $\cps{\blank}$
 from the call-by-value $\lambda\modal$-calculus
 to the call-by-name $\lambda\modal$-calculus is defined
 by Fig~\ref{FIG:cps}\@.
 We write $\mcps{M,K}$ for the administrative normal form of $\cps{M}K$\@.
\end{definition}

\begin{figure}[t]
 \begin{align*}
  \tcps{p} &= p \\
  \tcps{\sigma\imply\tau}
  &= (\tcps{\tau}\imply\return)\imply\tcps{\sigma}\imply\return \\
  \tcps{\modal{\sigma}} &= \modal{\tcps{\sigma}} \\
  \cpsv{x} &= x \\
  \cpsv{c} &= c \\
  \cpsv{\labst{x}{M}} &= \labst{k}\labst{x}{\cps{M}k} \\
  \cpsv{\boxin{\vect{x}}{\vect{U}}{V}}
  &= \boxin{\vect{x}}{\cpsv{\vect{U}}}{\cpsv{V}} \\
  \cps{x} &= \labst{k}{k\cpsv{x}} \\
  \cps{c} &= \labst{k}{k\cpsv{c}} \\
  \cps{\labst{x}{M}} &= \labst{k}{k(\cpsv{\labst{x}{M}})} \\
  \cps{MN} &= \labst{k}{\cps{M}(\labst{y}{\cps{N}(yk)})} \\
  \cps{\boxin{\vect{x}}{\vect{M}}{V}}
  &= \labst{k}{\cps{\vect{M}}(\labst{\vect{y}}
  {k(\cpsv{\boxin{\vect{x}}{\vect{y}}{V}})})}
 \end{align*}
 \caption{CPS transformation with $\modal$}
 \label{FIG:cps}
\end{figure}

\begin{proposition}
 If $\var{x_1}{\sigma_1},\ldots,\var{x_n}{\sigma_n} \prove M \type{\tau}$ holds,
 $\var{x_1}{\tcps{\sigma_1}},\ldots,\var{x_n}{\tcps{\sigma_n}} \prove
 \cps{M} \type{(\tcps{\tau}\imply\return)\imply\return}$ holds.
\end{proposition}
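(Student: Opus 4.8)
The plan is an induction on the derivation of $\varGamma \prove M \type{\tau}$, after first strengthening the statement so that it carries itself: I would prove simultaneously that whenever $\varGamma \prove V \type{\tau}$ with $V$ a value, then $\tcps{\varGamma} \prove \cpsv{V} \type{\tcps{\tau}}$, where $\tcps{\varGamma}$ denotes the context obtained by applying the type translation $\tcps{\blank}$ pointwise. The two claims must be established together, since the clauses for $\cps{\labst{x}{M}}$ and $\cpsv{\labst{x}{M}}$, and likewise the two box clauses of Figure~\ref{FIG:cps}, refer to one another.

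For the value claim, $\cpsv{x} = x$ and $\cpsv{c} = c$ are immediate. For $\cpsv{\labst{x}{M}} = \labst{k}\labst{x}{\cps{M}k}$, apply the term hypothesis to the premise $\varGamma,\var{x}{\sigma} \prove M \type{\tau}$ to get $\tcps{\varGamma},\var{x}{\tcps{\sigma}} \prove \cps{M} \type{(\tcps{\tau}\imply\return)\imply\return}$, whence $\labst{k}\labst{x}{\cps{M}k}$ has type $(\tcps{\tau}\imply\return)\imply\tcps{\sigma}\imply\return$, which is exactly $\tcps{\sigma\imply\tau}$. For $\cpsv{\boxin{\vect{x}}{\vect{U}}{V}} = \boxin{\vect{x}}{\cpsv{\vect{U}}}{\cpsv{V}}$: the box typing rule types the body under $\var{x_1}{\sigma_1},\ldots,\var{x_n}{\sigma_n}$ only, so the hypothesis gives $\var{x_1}{\tcps{\sigma_1}},\ldots,\var{x_n}{\tcps{\sigma_n}} \prove \cpsv{V} \type{\tcps{\tau}}$ together with $\cpsv{U_i} \type{\modal{\tcps{\sigma_i}}}$, and re-applying the box rule yields $\boxin{\vect{x}}{\cpsv{\vect{U}}}{\cpsv{V}} \type{\modal{\tcps{\tau}}} = \tcps{\modal{\tau}}$, using $\tcps{\modal{\sigma}} = \modal{\tcps{\sigma}}$.

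For the term claim, the variable, constant, and $\lambda$ cases reduce to the value cases via the uniform shape $\cps{W} = \labst{k}{k\cpsv{W}}$: from $\cpsv{W} \type{\tcps{\tau}}$ and $\var{k}{\tcps{\tau}\imply\return}$ one reads off $\labst{k}{k\cpsv{W}} \type{(\tcps{\tau}\imply\return)\imply\return}$. The application case $\cps{MN} = \labst{k}{\cps{M}(\labst{y}{\cps{N}(yk)})}$ uses $\tcps{\sigma\imply\tau} = (\tcps{\tau}\imply\return)\imply\tcps{\sigma}\imply\return$: with $\var{y}{\tcps{\sigma\imply\tau}}$ and $\var{k}{\tcps{\tau}\imply\return}$ we get $yk \type{\tcps{\sigma}\imply\return}$, then $\cps{N}(yk) \type{\return}$ and $\cps{M}(\labst{y}{\cps{N}(yk)}) \type{\return}$ by the hypotheses for $N$ and $M$.

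The case that needs a little care is $\cps{\boxin{\vect{x}}{\vect{M}}{V}} = \labst{k}{\cps{\vect{M}}(\labst{\vect{y}}{k(\cpsv{\boxin{\vect{x}}{\vect{y}}{V}})})}$, which unfolds to the iterated application $\cps{M_1}(\labst{y_1}{\cdots\cps{M_n}(\labst{y_n}{k(\boxin{\vect{x}}{\vect{y}}{\cpsv{V}})})\cdots})$. I would type this from the inside out. The value hypothesis gives $\var{x_1}{\tcps{\sigma_1}},\ldots,\var{x_n}{\tcps{\sigma_n}} \prove \cpsv{V} \type{\tcps{\tau}}$, so the box rule gives $\var{y_1}{\modal{\tcps{\sigma_1}}},\ldots,\var{y_n}{\modal{\tcps{\sigma_n}}} \prove \boxin{\vect{x}}{\vect{y}}{\cpsv{V}} \type{\modal{\tcps{\tau}}}$; with $\var{k}{\modal{\tcps{\tau}}\imply\return}$ the innermost body has type $\return$, and then, peeling off one $y_i$ at a time, each $\labst{y_i}{\cdots} \type{\modal{\tcps{\sigma_i}}\imply\return}$ is fed to $\cps{M_i} \type{(\modal{\tcps{\sigma_i}}\imply\return)\imply\return}$ (the term hypothesis, using $\tcps{\modal{\sigma_i}} = \modal{\tcps{\sigma_i}}$) to return type $\return$ again; after $n$ steps $\labst{k}{\cdots} \type{(\modal{\tcps{\tau}}\imply\return)\imply\return} = (\tcps{\modal{\tau}}\imply\return)\imply\return$. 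The only real work is the bookkeeping of this iterated application and checking that the type translation commutes with $\modal$ at each stage, so that the argument transforms land in exactly the double-negation types demanded by $\cps{M_i}$; no genuine difficulty arises.
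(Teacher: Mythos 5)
Your proof is correct. The paper states this proposition without proof, treating it as a routine induction, and your argument---a mutual induction on typing derivations establishing $\tcps{\varGamma} \prove \cpsv{V} \type \tcps{\tau}$ for values alongside $\tcps{\varGamma} \prove \cps{M} \type (\tcps{\tau}\imply\return)\imply\return$ for terms, with the only delicate case being the iterated application in the box clause---is exactly the standard argument the author is eliding.
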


\begin{definition}
 The CPS language is defined as a subcalculus of
 the call-by-name $\lambda\modal$-calculus:
 \begin{align*}
  V &\bnfeq c \bnfor x \bnfor \labst{k}{K}
  \bnfor \boxin{x,\ldots,x}{V,\ldots,V}{V} ,\\
  K &\bnfeq k \bnfor \labst{x}{A} \bnfor VK ,\\
  A &\bnfeq KV \bnfor (\labst{k}{A})K .
 \end{align*}
 The transformation $\icps{\blank}$ from the CPS language
 to the call-by-value $\lambda\modal$-calculus
 is defined by Fig~\ref{FIG:invcps}\@.
\end{definition}

\begin{figure}[t]
 \begin{align*}
  \icps{c} &= c \\
  \icps{x} &= x \\
  \icps{\labst{k}{k}} &= \labst{x}{x} \\
  \icps{\labst{k}\labst{x}{A}} &= \labst{x}\icps{A} \\
  \icps{\labst{k}{VK}} &= \labst{x}\icps{VKx} \\
  \icps{\boxin{\vect{x}}{\vect{U}}{V}}
  &= \boxin{\vect{x}}{\icps{\vect{U}}}{\icps{V}} \\
  \icps{k} &= \blank \\
  \icps{\labst{x}{A}} &= (\labst{x}{\icps{A}})\blank \\
  \icps{cK} &= \icps{K}[c\blank] \\
  \icps{xK} &= \icps{K}[x\blank] \\
  \icps{(\labst{k}{H})K} &= \icps{H\subst{k}{K}} \\
  \icps{KV} &= \icps{K}[\icps{V}] \\
  \icps{(\labst{k}{A})K} &= \icps{A\subst{k}{K}}
 \end{align*}
 \caption{Inverse of CPS transformation}
 \label{FIG:invcps}
\end{figure}

\begin{proposition}
 The CPS language is closed under $\reducto[\const{n}]$\@.
\end{proposition}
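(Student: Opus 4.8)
The plan is to prove the slightly stronger statement that each of the three syntactic classes $V$, $K$, $A$ of the CPS language is individually closed under one step of $\reducto[\const{n}]$; closure of the whole language then follows. Before the induction one records the standard CPS discipline that $k$ ranges over \emph{continuation} variables (bound only by $\labst{k}{\blank}$) and $x$ over ordinary \emph{value} variables, and checks that every rule of $\const{n}$ respects this separation: $\betar{\arrow}$ substitutes a value for a value variable or a $K$ for a continuation variable according to the sort of the abstracted variable, and $\idr{\modal}$, $\betar{\modal}$ only substitute a value for a value variable. Hence no sort-violating term ever arises along a reduction.

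The technical core is a substitution lemma, obtained by a routine simultaneous induction on the shapes of $W$, $K$, $A$: if $V$ is a value then $W\subst{x}{V}$ is a value, $K\subst{x}{V}$ a continuation and $A\subst{x}{V}$ an answer; and if $K_0$ is a continuation then $W\subst{k}{K_0}$ is a value, $K\subst{k}{K_0}$ a continuation and $A\subst{k}{K_0}$ an answer. Every grammar clause is immediate — for instance $(VK)\subst{k}{K_0} = (V\subst{k}{K_0})(K\subst{k}{K_0})$ is again a value applied to a continuation, hence a continuation — and the only clause that actually uses the hypothesis on the substituted term is the base case $k\subst{k}{K_0} = K_0$. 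Bound variables are renamed to avoid capture as usual.

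For the main induction, since $\reducto[\const{n}]$ is the congruence closure of the four base rules, a step is either a congruence step or a root step. In a congruence step a proper subterm is reduced; by the induction hypothesis it stays in its class, and as each production of the grammar is built from subterms of fixed classes the whole term stays in the CPS language. (The one delicate point: in $A \bnfeq (\labst{k}{A'})K$ the head $\labst{k}{A'}$ is not itself one of the three classes; if it is $\etar{\arrow}$-contracted, then $A'$ must be of the form $(\labst{k'}{A''})k$, so $\labst{k}{A'} \reducto \labst{k'}{A''}$ and the enclosing term becomes $(\labst{k'}{A''})K$, again an answer.) In a root step, the restrictiveness of the grammar — in particular that $\labst{x}{A}$ is \emph{not} a value — keeps the list of possible redexes short:

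\begin{itemize}
 \item $\betar{\arrow}$: the redex is $(\labst{k}{K})K'$, contracting to $K\subst{k}{K'}$ (a continuation); or $(\labst{x}{A})V$, contracting to $A\subst{x}{V}$ (an answer); or $(\labst{k}{A})K$, contracting to $A\subst{k}{K}$ (an answer). The substitution lemma gives the conclusion in each case.
 \item $\etar{\arrow}$: the redex is $\labst{k}{Vk}$ contracting to the value $V$, or $\labst{x}{Kx}$ contracting to the continuation $K$, or the application head $\labst{k}{(\labst{k'}{A})k}$ already treated above.
 \item $\idr{\modal}$: the redex is $\boxin{x}{V}{x}$, contracting to the value $V$.
 \item $\betar{\modal}$: the redex is a box all of whose arguments and body are values and one of whose arguments is itself a box of values; the contractum $\boxin{\vect{w},\vect{y},\vect{z}}{\vect{P},\vect{L},\vect{Q}}{M\subst{x}{N}}$ again has value arguments (a concatenation of sequences of values) and, by the substitution lemma, a value body $M\subst{x}{N}$, hence is a CPS value.
\end{itemize}

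The one step that genuinely needs attention is verifying that this enumeration of root redexes is exhaustive — that, thanks to the sort separation and to the exclusion of $\labst{x}{A}$ from the values, no $\const{n}$-redex of any other shape can embed into a $V$, a $K$, or an $A$. Given the grammar this is purely mechanical, so I do not expect a real obstacle.
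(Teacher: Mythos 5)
Your proof is correct; the paper states this proposition without proof, and your argument (sorted grammar, substitution lemma by simultaneous induction on $V$, $K$, $A$, then a case analysis of root redexes against the three application forms $VK$, $KV$, $(\labst{k}{A})K$ and the three abstraction forms) is exactly the routine check the paper leaves implicit. You also correctly isolate the only delicate point, namely that the head $\labst{k}{A}$ of the production $(\labst{k}{A})K$ is not itself one of the three classes and can be $\etar{\arrow}$-contracted only when $A$ has the shape $(\labst{k'}{A'})k$, in which case the contractum is again an answer.
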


The following lemma is the core of the soundness and completeness.
An outline of the proof is just the same
as \cite{SabryFelleisen93:RPCPS}'s.

\begin{lemma}
 \begin{enumerate}
  \item  $M \reducto[\rulename{lift},\rulename{flat}] N$ implies
	 $\mcps{M,k} \equiv \mcps{N,k}$\@.
  \item  $M \reducto[\idr{\arrow},{\betar[v]{\arrow}},{\betar[v]{\arrow}},
	 \betar{\Omega},\idr{\modal},{\betar[v]{\modal}}] N$
	 implies $\mcps{M,k} \reducto[\const{n}]^+ \mcps{N,k}$\@.
  \item  $M \reducto[\const{n}] N$ implies
	 $\icps{M} \reducto[\const{v}]^\ast \icps{N}$\@.
  \item $M \reducto[\rulename{lift},\rulename{flat}]^\ast \icps{\mcps{M,k}}$\@.
 \end{enumerate}
\end{lemma}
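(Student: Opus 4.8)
The plan is to prove the four statements of the lemma essentially simultaneously, since they interact: (1) and (2) concern the forward CPS transform $\mcps{\blank,\blank}$, while (3) and (4) concern the inverse transform $\icps{\blank}$, and the cleanest route follows the structure of Sabry--Felleisen's proof for the pure $\lambda_{\mathrm c}$-calculus. The new work is isolating what the modal constructor $\boxin{\vect x}{\vect M}{V}$ contributes on top of the known $\lambda$-calculus cases, so I would organise the argument as ``recall the Sabry--Felleisen cases, then handle $\modal$''.

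First I would establish the basic substitution and compositionality facts that make $\mcps{\blank,\blank}$ and $\icps{\blank}$ well-behaved: that $\mcps{M,K}$ is obtained from $\cps{M}K$ by contracting only \emph{administrative} redexes (the $\betar{\arrow}$-redexes introduced by the transform itself, never those coming from the source term), and the standard ``colon-translation'' identities, in particular $\mcps{E[M],K} \equiv \mcps{M, K'}$ for a suitable continuation $K'$ built from $E$, and the value-substitution lemma $\mcps{M\subst{x}{V},K} \equiv \mcps{M,K}\subst{x}{\cpsv V}$. For the modal clause one checks directly from Fig.~\ref{FIG:cps} that $\cps{\boxin{\vect x}{\vect M}{V}}$ already puts the subterms $M_i$ in evaluation position feeding a continuation, so administrative reduction pushes a surrounding context inside exactly as the non-modal product-like cases do; this gives part~1 for the new $\rulename{lift}$/$\rulename{flat}$ instances where $C$ is the box context $\boxin{\ldots}{\vect V,\blank,\vect M}{\blank}{M}$, because both sides administratively normalise to the same term. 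For part~2 I would go rule by rule: the arrow rules $\idr{\arrow}, \betar[v]{\arrow}, \etar[v]{\arrow}, \betar{\Omega}$ are exactly as in Sabry--Felleisen, and for $\idr{\modal}$ and $\betar[v]{\modal}$ one computes $\mcps{\boxin{x}{M}{x},k}$ and $\mcps{\boxin{\vect w,x,\vect z}{\vect W,\boxin{\vect y}{\vect N}{V},\vect P}{M},k}$ and shows each performs at least one genuine $\idr{\modal}$ resp.\ $\betar{\modal}$ step (plus administrative bookkeeping) to reach the CPS of the reduct, using the value-substitution lemma for the $M\subst{x}{V}$ part.

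For part~3 I would induct on the derivation of $M \reducto[\const{n}] N$ inside the CPS language, using the clause-by-clause definition of $\icps{\blank}$ in Fig.~\ref{FIG:invcps}: a $\betar{\arrow}$ step of the shape $(\labst k H)K \reducto (\ldots)$ or $(\labst k A)K \reducto (\ldots)$ is \emph{absorbed} by the defining equations $\icps{(\labst k H)K} = \icps{H\subst k K}$ and $\icps{(\labst k A)K} = \icps{A\subst k K}$, so it maps to zero steps; the remaining $\betar{\arrow}, \etar{\arrow}$ redexes in a CPS term sit inside a value $\labst x A$ or correspond to a source-level redex and map to one $\betar[v]{\arrow}$ or $\etar[v]{\arrow}$ (or $\idr{\arrow}$) step; the $\idr{\modal}$ and $\betar{\modal}$ cases map to $\idr{\modal}$ and $\betar[v]{\modal}$ respectively because $\icps{\boxin{\vect x}{\vect U}{V}} = \boxin{\vect x}{\icps{\vect U}}{\icps V}$ is a plain congruence. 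The one subtlety is that a single $\betar{\modal}$ redex in the source CPS term may, after the absorbed administrative $\betar{\arrow}$ steps, surface as the corresponding $\betar[v]{\modal}$ redex only up to some $\rulename{lift}$/$\rulename{flat}$ shuffling; this is why the conclusion is $\reducto[\const{v}]^\ast$ rather than a single step, and it is handled exactly as in Sabry--Felleisen by tracking evaluation contexts through $\icps{\blank}$. Part~4 is then a structural induction on $M$ showing that $\icps{\mcps{M,k}}$ recovers $M$ up to the ``naming'' discrepancies that $\rulename{lift}$ and $\rulename{flat}$ precisely repair — for the modal case one expands $\mcps{\boxin{\vect x}{\vect M}{V},k}$, applies $\icps{\blank}$, and observes that the sequential evaluation of the $M_i$ introduced by $\cps{\blank}$ is undone into the box context $\boxin{\vect x}{\vect V,\blank,\vect M}{V}$ by a sequence of $\rulename{lift}$/$\rulename{flat}$ steps, invoking the induction hypothesis on each $M_i$ and on $V$.

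The main obstacle I expect is \textbf{part~3}, specifically verifying that every $\betar{\arrow}$/$\etar{\arrow}$ redex that can occur in a term of the CPS language is accounted for by the case analysis of $\icps{\blank}$ — i.e.\ that the grammar of $V, K, A$ is genuinely closed under $\reducto[\const{n}]$ (Proposition~7, which we may assume) \emph{and} that $\icps{\blank}$ is compatible with the redex structure so that no step is ``lost'' or ``duplicated'' in a way that breaks the $\reducto[\const{v}]^\ast$ bound; keeping the induction hypothesis strong enough to push through the $VK$ and $(\labst k A)K$ clauses, where substitution of a continuation $K$ for $k$ happens, is the delicate point, and the modal constructor adds one more congruence case that must be checked to interact correctly with that substitution.
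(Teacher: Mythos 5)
Your proposal is correct and matches the paper's approach: the paper itself gives no detailed proof, stating only that "an outline of the proof is just the same as" Sabry and Felleisen's, and your plan is exactly that outline (administrative-reduction and colon-translation lemmas, rule-by-rule case analysis, absorption of continuation substitutions by the clauses of the inverse transform) augmented with the required new cases for the $\boxin{\vect{x}}{\vect{M}}{V}$ constructor. The modal cases you identify — the box context as a new simple evaluation context for parts 1 and 4, and $\idr{\modal}$/$\betar[v]{\modal}$ mapping to $\idr{\modal}$/$\betar{\modal}$ and back — are precisely the additional content this lemma carries over the original.
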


\begin{theorem}
 For $\lambda\modal$-terms $M$ and $N$,
 $M \eql[\const{v}] N$ holds if and only if
 $\cps{M} \eql[\const{n}] \cps{N}$ holds.
\end{theorem}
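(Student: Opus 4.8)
The plan is to derive Theorem~14 from the four clauses of Lemma~13 by a standard argument in the style of Sabry and Felleisen. The two directions are handled separately, but both reduce to manipulations of the administrative-normal-form operator $\mcps{\blank,k}$ and its one-sided inverse $\icps{\blank}$.

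\medskip

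\emph{Soundness ($M \eql[\const{v}] N$ implies $\cps{M} \eql[\const{n}] \cps{N}$).}
First I would observe that it suffices to treat a single reduction step $M \reducto[\const{v}] N$ and then close under reflexivity, transitivity and symmetry. Since $\const{v} = \{\idr{\arrow},\betar[v]{\arrow},\etar[v]{\arrow},\rulename{lift},\rulename{flat},\betar{\Omega},\idr{\modal},\betar[v]{\modal}\}$, each step falls under clause~1 or clause~2 of Lemma~13 (the $\etar[v]{\arrow}$ rule is the one case not literally listed, and I would either fold it into clause~2 along the same lines or note that $\etar[v]{\arrow}$ is already $\eql[\const{n}]$-provable on the CPS side; I expect this to be a triviality). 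In either case we get $\mcps{M,k} \eql[\const{n}] \mcps{N,k}$. It then remains to pass from $\mcps{M,k}$ back to $\cps{M}$: because $\mcps{M,k}$ is by definition the administrative normal form of $\cps{M}k$, and administrative reductions are among $\reducto[\const{n}]$, we have $\cps{M}k \eql[\const{n}] \mcps{M,k}$, hence $\cps{M}k \eql[\const{n}] \cps{N}k$ for a fresh $k$; one application of $\etar{\arrow}$ (wrapping in $\labst{k}{\blank}$ and contracting) upgrades this to $\cps{M} \eql[\const{n}] \cps{N}$. Proposition~12 guarantees we stay inside the CPS language throughout, so nothing escapes the intended fragment.

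\medskip

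\emph{Completeness ($\cps{M} \eql[\const{n}] \cps{N}$ implies $M \eql[\const{v}] N$).}
Here the strategy is to apply $\icps{\blank}$ to the given $\const{n}$-equality. By clause~3 of Lemma~13, $\reducto[\const{n}]$ steps are transported to $\reducto[\const{v}]^\ast$ steps, so $\icps{\cps{M}} \eql[\const{v}] \icps{\cps{N}}$. The remaining task is to identify $\icps{\cps{M}}$ with $M$ up to $\eql[\const{v}]$. This is where clause~4 does the work: it gives $M \reducto[\rulename{lift},\rulename{flat}]^\ast \icps{\mcps{M,k}}$, and since $\mcps{M,k}$ is obtained from $\cps{M}k$ by administrative $\reducto[\const{n}]$ steps, combining with clause~3 again yields $\icps{\cps{M}k} \eql[\const{v}] \icps{\mcps{M,k}}$; a small bookkeeping step relates $\icps{\cps{M}k}$ and $\icps{\cps{M}}$ (the latter applied to the hole, per the clauses $\icps{\labst{k}{\cdots}}$ in Fig~\ref{FIG:invcps}). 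Chaining these, $M \eql[\const{v}] \icps{\cps{M}}$ and likewise for $N$, whence $M \eql[\const{v}] N$.

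\medskip

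The genuinely delicate point is the bookkeeping that ties $\mcps{\blank,k}$, $\cps{\blank}$, and $\icps{\blank}$ together at the level of \emph{terms applied to a continuation variable} versus \emph{terms}: one must be careful that the fresh $k$ introduced when going from $\cps{M}$ to $\mcps{M,k}$ is not free in the relevant subterms, that administrative normalization really is a finite sequence of $\reducto[\const{n}]$ steps (so the ``$\equiv$'' in clause~1 is legitimate rather than needing extra $\const{n}$-equalities), and that $\icps{\blank}$ is well-defined on the images that actually occur — all the terms produced by $\cps{\blank}$ and its administrative reducts lie in the CPS language of Definition~10, which is exactly what Proposition~12 supplies. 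The new $\modal$ cases add essentially nothing structural beyond what Sabry and Felleisen already handle: the box former is threaded through $\cps{\blank}$, $\mcps{\blank,k}$ and $\icps{\blank}$ homomorphically, and the $\idr{\modal}$ and $\betar[v]{\modal}$ rules are discharged by clauses~2 and~3 just like the arrow rules. So I expect the proof itself to be short, with the only real care needed in stating the fresh-variable side conditions correctly.
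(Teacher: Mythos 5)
The paper gives no explicit proof of this theorem beyond declaring the preceding lemma to be its core, and your derivation is exactly the intended Sabry--Felleisen-style argument. Your soundness direction is correct as written: each $\reducto[\const{v}]$ step is discharged by clause~1 or clause~2 (you are right that the duplicated $\betar[v]{\arrow}$ in clause~2 is a typo for $\etar[v]{\arrow}$), administrative normalization consists of $\betar{\arrow}$ steps so $\cps{M}k \eql[\const{n}] \mcps{M,k}$, and the final $\etar{\arrow}$ step with a fresh $k$ recovers $\cps{M} \eql[\const{n}] \cps{N}$.

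There is, however, one genuine gap in the completeness direction. You claim that clause~3 transports the hypothesis $\cps{M} \eql[\const{n}] \cps{N}$ to $\icps{\cps{M}} \eql[\const{v}] \icps{\cps{N}}$, and you justify the applicability of $\icps{\blank}$ by appealing only to the proposition that the CPS language is closed under $\reducto[\const{n}]$. That is not enough: $\eql[\const{n}]$ is a zig-zag, and an expansion step in it, say $L' \reducto[\const{n}] L$ with $L$ in the CPS language, gives no guarantee that $L'$ lies in the CPS language, so $\icps{L'}$ need not be defined and clause~3 need not apply to that step. The standard repair, which must be invoked explicitly, is the confluence of $\reducto[\const{n}]$ (proved earlier for the call-by-name calculus): it replaces the conversion by a common-reduct valley $\cps{M} \reducto[\const{n}]^\ast L$ and $\cps{N} \reducto[\const{n}]^\ast L$, and only then does closure under reduction keep every intermediate term inside the CPS language so that clause~3 can be applied pointwise to both legs. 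With that one addition, together with the clause~4 bookkeeping you already describe, the argument is complete.
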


The lemma helps us to prove
the strongly normalizing property and the confluency
of the call-by-value $\lambda\modal$-calculus too.

\begin{proposition}
 The call-by-value $\lambda\modal$-calculus is strongly normalizable
 with respect to $\reducto[\const{v}]$\@.
\end{proposition}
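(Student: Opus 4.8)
The plan is to reduce strong normalization of the call-by-value calculus to strong normalization of the call-by-name calculus, which was established in Proposition~2, using the CPS machinery just developed. The key tool is part~2 of the Lemma: any reduction step drawn from the ``essential'' set $\{\idr{\arrow},\betar[v]{\arrow},\etar[v]{\arrow},\betar{\Omega},\idr{\modal},\betar[v]{\modal}\}$ is mapped by $\mcps{\blank,k}$ to a strictly positive number of $\reducto[\const{n}]$ steps on administrative normal forms. Since $\reducto[\const{n}]$ is strongly normalizing (Proposition~2), there can be no infinite $\const{v}$-reduction sequence that uses infinitely many essential steps: each such step strictly decreases the (well-founded) ordinal height of $\mcps{M,k}$ in the call-by-name calculus, so the total number of essential steps in any reduction sequence starting from a fixed typed $M$ is bounded.

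So the remaining obstacle is the administrative reductions $\rulename{lift}$ and $\rulename{flat}$, which by part~1 of the Lemma are collapsed by $\mcps{\blank,k}$ to \emph{equality} of administrative normal forms and therefore are invisible to the CPS measure. First I would argue that $\reducto[\rulename{lift},\rulename{flat}]$ is itself strongly normalizing in isolation: one assigns to each term a natural-number weight --- for instance, the multiset of positions of ``mismatched'' redexes, or more concretely a measure counting pairs (simple evaluation context frame $C$, enclosed application that is either a $\beta$-redex or a head-variable application) that still need to be hoisted --- and checks that both rules strictly decrease it. Each $\rulename{lift}$ step moves a $\beta$-redex outward past one context frame (reducing its nesting depth under simple contexts), and each $\rulename{flat}$ step replaces $C[yM]$, which is blocked, by an outer $\beta$-redex over a strictly simpler body; a lexicographic combination of ``sum of depths of liftable redexes'' with term size handles this. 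I would also need to confirm that $\reducto[\rulename{lift},\rulename{flat}]$ does not create new essential redexes out of thin air in a way that breaks the bound --- but since these rules only rearrange existing subterms without duplicating $\beta$- or $\modal$-redexes, the count of essential redexes is non-increasing across administrative steps.

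With both facts in hand the argument is a standard ``commutation / finite developments''-style interleaving: suppose for contradiction there is an infinite $\const{v}$-reduction from a typed term $M$. It cannot be eventually all-administrative, since $\reducto[\rulename{lift},\rulename{flat}]$ is SN; hence it contains infinitely many essential steps. Applying $\mcps{\blank,k}$ along the sequence, parts~1 and~2 of the Lemma give an infinite $\reducto[\const{n}]$-reduction (each essential step contributes at least one $\const{n}$-step, administrative steps contribute zero but only finitely many can occur consecutively), contradicting Proposition~2. The main delicacy --- and where I would spend the most care --- is pinning down the termination measure for $\rulename{lift}$ and $\rulename{flat}$ precisely, because $\rulename{flat}$'s side condition $C \neq V\blank$ and the way $\rulename{lift}$ pushes a redex through $\boxin{\cdots}{\vect{W},\blank,\ldots}{M}$ frames make a naive ``size'' measure fail; a depth-in-context measure, decreasing lexicographically, is the natural fix.

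One should also record the mild bookkeeping point that Proposition~4 (subject reduction) guarantees every term along the sequence stays typable, so Proposition~9 and the CPS typing (Proposition~6) apply throughout, and the call-by-name term $\mcps{M,k}$ to which we appeal is genuinely typed and hence in the scope of Proposition~2.
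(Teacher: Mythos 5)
Your proposal is correct and follows essentially the same route as the paper: the paper's proof likewise observes that $\reducto[\rulename{lift}]$ and $\reducto[\rulename{flat}]$ admit no infinite sequence on their own, and then transports any hypothetical infinite $\const{v}$-reduction through $\mcps{\blank,k}$ to an infinite $\const{n}$-reduction via parts~1 and~2 of the Lemma, contradicting Proposition~2. Your additional care about the termination measure for the administrative rules and the typing bookkeeping only fills in details the paper leaves implicit.
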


\begin{proof}
 There is no infinite sequence of $\reducto[\rulename{lift}]$
 and $\reducto[\rulename{flat}]$.
 Therefore, if there is an infinite reduction sequence
 in the call-by-value $\lambda\modal$-calculus,
 there is an infinite reduction sequence
 in the call-by-name calculus via $\mcps{\blank,k}$\@.
\end{proof}

\begin{proposition}
 $\reducto[\const{v}]$ is confluent.
\end{proposition}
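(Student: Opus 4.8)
The plan is to derive confluence of $\reducto[\const{v}]$ from confluence of $\reducto[\const{n}]$ (Proposition~4) using the CPS transformation, rather than attempting a direct critical-pair analysis, since the call-by-value rules ($\rulename{lift}$, $\rulename{flat}$, $\betar{\Omega}$, and the commuting conversions) generate a large and delicate set of overlaps. By the previous proposition we already know $\reducto[\const{v}]$ is strongly normalizing, so by Newman's lemma it suffices to establish local confluence; but in fact the CPS approach gives full confluence directly, which is cleaner. First I would recall from Lemma~1 that $\mcps{\blank,k}$ maps $\reducto[\rulename{lift},\rulename{flat}]$-steps to equalities and every other $\reducto[\const{v}]$-step to a nonempty $\reducto[\const{n}]$-reduction, while $\icps{\blank}$ maps $\reducto[\const{n}]$-steps to $\reducto[\const{v}]^\ast$-reductions, and that $M \reducto[\rulename{lift},\rulename{flat}]^\ast \icps{\mcps{M,k}}$.

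Next I would set up the reflection. Suppose $M \reducto[\const{v}]^\ast N_1$ and $M \reducto[\const{v}]^\ast N_2$. Applying $\mcps{\blank,k}$ and parts~1 and~2 of Lemma~1, we get $\mcps{M,k} \eql[\const{n}] \mcps{N_1,k}$ and $\mcps{M,k} \eql[\const{n}] \mcps{N_2,k}$, hence $\mcps{N_1,k} \eql[\const{n}] \mcps{N_2,k}$; by confluence of $\reducto[\const{n}]$ there is a common $\reducto[\const{n}]$-reduct $P$ of $\mcps{N_1,k}$ and $\mcps{N_2,k}$. Now apply $\icps{\blank}$ and part~3 of Lemma~1: $\icps{\mcps{N_1,k}} \reducto[\const{v}]^\ast \icps{P}$ and $\icps{\mcps{N_2,k}} \reducto[\const{v}]^\ast \icps{P}$. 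Finally, part~4 gives $N_i \reducto[\rulename{lift},\rulename{flat}]^\ast \icps{\mcps{N_i,k}}$ for $i=1,2$, so both $N_1$ and $N_2$ reduce via $\reducto[\const{v}]^\ast$ to $\icps{P}$, which is the desired common reduct.

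The main obstacle I anticipate is not the diagram-chase above—that is essentially formal once Lemma~1 is in hand—but making sure the lemma is applied only where its hypotheses hold: in particular, $\mcps{\blank,k}$ and $\icps{\blank}$ are defined on the restricted call-by-value calculus and on the CPS language respectively, so I would first note (as the excerpt already does) that the restricted calculus is closed under $\reducto[\const{v}]$ and is a conservative extension, and that by Proposition~9 the CPS language is closed under $\reducto[\const{n}]$, so that $P$ above indeed lies in the CPS language and $\icps{P}$ makes sense. A secondary point to check is that $\mcps{\blank,k}$ reflects equality faithfully enough: we only need that it sends $\eql[\const{v}]$ into $\eql[\const{n}]$, which is immediate from parts~1–2 of Lemma~1, and we do not need injectivity. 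With these caveats dispatched, the confluence of $\reducto[\const{v}]$ follows.
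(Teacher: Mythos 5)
Your proof is correct and follows essentially the same route as the paper: push the two reduction sequences through $\mcps{\blank,k}$ via parts 1--2 of the lemma, close the diagram in the call-by-name calculus by its confluence, and pull the common reduct back with $\icps{\blank}$ and parts 3--4. The only (harmless) difference is that you pass to $\eql[\const{n}]$ and invoke Church--Rosser, whereas parts 1--2 in fact give the directed reduction $\mcps{M,k} \reducto[\const{n}]^\ast \mcps{N_j,k}$, letting one apply confluence at $\mcps{M,k}$ directly; your added care about the restricted calculus and the CPS language being closed under the respective reductions is a welcome elaboration of what the paper leaves implicit.
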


\begin{proof}
 Although the confluency can be shown directly,
 we prove it using the lemma and the confluency of
 the call-by-name $\lambda\modal$-calculus.
 Assume $M \reducto[\const{v}]^\ast N_1$ and
 $M \reducto[\const{v}]^\ast N_2$\@.
 Since $\mcps{M,k} \reducto[\const{n}]^\ast \mcps{N_j,k}$,
 there is a term $L$ such that $\mcps{N_j,k} \reducto[\const{n}]^\ast L$\@.
 $\icps{L}$ is an evidence of confluence.
\end{proof}

\section{Other Formulations of Call-by-Value}\label{SS:annex}

Although it has been shown that
the call-by-value $\lambda\modal$-calculus
has expected properties,
we can propose another call-by-value axiomatization
following \cite{Moggi88:CLCM}\@.

\begin{definition}
 Define the computational $\lambda\modal$-calculus
 by adding the new syntax $\letin{x}{N}{M}$ to
 the syntax of the call-by-value $\lambda\modal$-calculus.
 The reduction rules of the computational $\lambda\modal$-calculus
 are given in Figure~\ref{FIG:cbv2}\@.
 Define $\const{c}$ as the set
 $\{\idr{\mathtt{let}},\betar[v]{\mathtt{let}},
 \betar[v]{\arrow},\etar[v]{\arrow},\ctext{comp},\ctext{let},
 \idr{\modal},\betar[v]{\modal}\}$\@.
\end{definition}

\begin{figure}[t]
 \begin{align*}
  &V,W \type{\text{value}} \\
  &A \type{\text{non-value}} \\
  &C \type{\text{simple evaluation context}} \\
  &\letin{x}{M}{x} \reducto[\idr{\mathtt{let}}] M \\
  &\letin{x}{V}{M} \reducto[{\betar[v]{\mathtt{let}}}] M\subst{x}{V} \\
  &(\labst{x}{M})V \reducto[{\betar[v]{\arrow}}] M\subst{x}{V} \\
  &\labst{x}{Vx} \reducto[{\etar[v]{\arrow}}] V &&x \not\in \freev{V} \\
  &\letin{x}{(\letin{y}{L}{N})}{M} \\
  &\linecont \reducto[\ctext{comp}] \letin{y}{L}{\letin{x}{N}{M}}
  &&y \not\in \freev{M} \\
  &C[A] \reducto[\ctext{let}] \letin{x}{A}{C[x]} \\
  &\boxin{x}{M}{x} \reducto[\idr{\modal}] M \\
  &\boxin{\vect{w},x,\vect{z}}
  {\vect{W},\boxin{\vect{y}}{\vect{N}}{V},\vect{P}}{M} \\
  &\linecont \reducto[{\betar[v]{\modal}}]
  \boxin{\vect{w},\vect{y},\vect{z}}{\vect{W},\vect{N},\vect{P}}{M\subst{x}{V}}
  &&\norm{\vect{w}} = \norm{\vect{W}}
 \end{align*}
 \caption{Computational reductions of $\lambda\modal$-calculus}
 \label{FIG:cbv2}
\end{figure}

\begin{proposition}
 If $\varGamma \prove M \type{\tau}$ and $M \reducto[\const{c}] N$ hold,
 then $\varGamma \prove N \type{\tau}$ holds.
\end{proposition}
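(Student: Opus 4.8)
The statement is the subject reduction theorem for the computational $\lambda\modal$-calculus, and I would prove it by the standard route: first the auxiliary lemmas on substitution and contexts, then a case analysis on the reduction rule used.

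The first ingredient is a \emph{weakening lemma} (typing is preserved when $\varGamma$ is extended by fresh declarations or permuted) and a \emph{substitution lemma}: if $\varGamma,\var{x}{\sigma} \prove M \type{\tau}$ and $\varGamma \prove N \type{\sigma}$, then $\varGamma \prove M\subst{x}{N} \type{\tau}$. Both go by routine induction on the typing derivation. The only rule worth a comment is the box rule: its body is typed in a context consisting \emph{exactly} of the box binders, so a substitution $\subst{x}{N}$ coming from outside never reaches the body and only acts on the premises $\varGamma \prove N_i \type{\modal{\sigma_i}}$; in particular the substitution lemma holds for arbitrary terms $N$, not merely values, which is all we need since the value restrictions on the reduction rules only constrain which redexes exist. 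I would also note here the (standard) typing rule for the new construct: $\varGamma \prove N \type{\sigma}$ and $\varGamma,\var{x}{\sigma} \prove M \type{\tau}$ give $\varGamma \prove \letin{x}{N}{M} \type{\tau}$.

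The second ingredient is a \emph{decomposition lemma} for simple evaluation contexts $C$ (hence also for evaluation contexts $E$): $\varGamma \prove C[M] \type{\tau}$ holds iff there is a type $\sigma$ with $\varGamma \prove M \type{\sigma}$ such that $\varGamma \prove C[M'] \type{\tau}$ for every $M'$ with $\varGamma \prove M' \type{\sigma}$. This is immediate by inspecting the three shapes of $C$, each of which uses the hole at a single typed position. With these in hand I would induct on the derivation of $M \reducto[\const{c}] N$. The congruence cases follow from the induction hypothesis, re-assembling with the decomposition lemma where the redex lies inside a context. For the root cases: $\idr{\mathtt{let}}$ and $\idr{\modal}$ are immediate; $\betar[v]{\mathtt{let}}$ and $\betar[v]{\arrow}$ are the substitution lemma; $\etar[v]{\arrow}$ uses inversion on the arrow type together with the side condition $x \notin \freev{V}$; and $\ctext{let}$ and $\ctext{comp}$ use the decomposition lemma and weakening to float the \texttt{let}-binding outward, the side condition $y \notin \freev{M}$ in $\ctext{comp}$ making the rescoping sound.

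The main obstacle is $\betar[v]{\modal}$, which merges two nested boxes while substituting. Inverting a derivation of $\boxin{\vect{w},x,\vect{z}}{\vect{W},\boxin{\vect{y}}{\vect{N}}{V},\vect{P}}{M} \type{\modal{\tau}}$ yields: a typing of $M$ under the context made up of exactly the binders $\vect{w}, x, \vect{z}$ at their declared types; the premises $\varGamma \prove W_i \type{\modal{(\cdots)}}$ and $\varGamma \prove P_j \type{\modal{(\cdots)}}$; and, from the premise $\varGamma \prove \boxin{\vect{y}}{\vect{N}}{V} \type{\modal{\sigma}}$ (where $\modal{\sigma}$ is the type assigned to the slot of $x$), a typing of $V$ under the context of exactly the binders $\vect{y}$, plus premises $\varGamma \prove N_k \type{\modal{(\cdots)}}$. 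To type the contractum $\boxin{\vect{w},\vect{y},\vect{z}}{\vect{W},\vect{N},\vect{P}}{M\subst{x}{V}}$ I would $\alpha$-rename so that $\vect{y}$ is fresh, weaken both the typing of $M$ and the typing of $V$ to the common context listing $\vect{w}, x, \vect{z}, \vect{y}$, apply the substitution lemma to get a typing of $M\subst{x}{V}$ over $\vect{w}, \vect{z}, \vect{y}$, permute it to $\vect{w}, \vect{y}, \vect{z}$, and finish with the box rule on the concatenated premise list $\vect{W}, \vect{N}, \vect{P}$, each entry of which already carries the required boxed type under $\varGamma$. The only delicate point is the bookkeeping that keeps the three blocks of binders aligned with the three blocks of premises — which is exactly what the side condition $\norm{\vect{w}} = \norm{\vect{W}}$ guarantees — and once that is set up the rest is mechanical.
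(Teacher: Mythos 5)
Your proof is correct and is exactly the routine argument the paper has in mind: the paper states this subject reduction proposition without proof, treating it as easily checkable, and your weakening/substitution/decomposition lemmas plus the case analysis (including the careful inversion and re-binding bookkeeping for $\betar[v]{\modal}$) fill in those details faithfully. No gaps; the only point worth making explicit is the strengthening step (discarding the unused $\var{x}{\sigma}$) in the $\etar[v]{\arrow}$ case, which you implicitly invoke via the side condition $x \not\in \freev{V}$.
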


It is easily seen that the computational $\lambda\modal$-calculus
is equivalent to the previous call-by-value $\lambda\modal$-calculus
with respect to equalities.

\begin{proposition}
 For $\lambda\modal$-terms $M$ and $N$,
 $M \eql[\const{v}] N$ holds
 if and only if $M \eql[\const{c}] N$ holds.
\end{proposition}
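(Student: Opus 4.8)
The plan is to translate each calculus into the other and check that equations are preserved. In one direction I would use the desugaring $\floor{\blank}$ from computational $\lambda\modal$-terms to call-by-value $\lambda\modal$-terms given by $\floor{\letin{x}{N}{M}} = (\labst{x}{\floor{M}})\floor{N}$, with $\floor{\blank}$ commuting with every other term former; in the other direction I would use the plain inclusion of $\lambda\modal$-terms into computational $\lambda\modal$-terms. Since an ordinary $\lambda\modal$-term contains no $\ftext{let}$, $\floor{\blank}$ is the identity on such terms, so the proposition follows once we show that $M \reducto[\const{c}] N$ implies $\floor{M} \eql[\const{v}] \floor{N}$ and that $M \reducto[\const{v}] N$ implies $M \eql[\const{c}] N$: the second statement gives $M \eql[\const{v}] N \Rightarrow M \eql[\const{c}] N$, and the first, together with $\floor{M}=M$, $\floor{N}=N$, gives the converse. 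Both are proved by inspecting the reduction rules and closing under contexts, using that $\floor{\blank}$ commutes with substitution, preserves the value/non-value distinction, and maps simple evaluation contexts to simple evaluation contexts.

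Most of the rules match up directly. The rules $\betar[v]{\arrow}$, $\etar[v]{\arrow}$, $\idr{\modal}$, $\betar[v]{\modal}$ are literally shared; under $\floor{\blank}$, $\idr{\mathtt{let}}$ becomes a $\idr{\arrow}$-step, $\betar[v]{\mathtt{let}}$ a $\betar[v]{\arrow}$-step, and $\ctext{comp}$ a single $\rulename{lift}$-step; conversely $\idr{\arrow}$ is the composite $(\labst{x}{x})M \reducto[\ctext{let}] \letin{w}{M}{(\labst{x}{x})w} \reducto[\betar[v]{\arrow}] \letin{w}{M}{w} \reducto[\idr{\mathtt{let}}] M$ (a single $\betar[v]{\arrow}$-step when $M$ is a value), and $\rulename{lift}$, $\rulename{flat}$, $\betar{\Omega}$ are obtained by combining a $\ctext{let}$-step with $\betar[v]{\arrow}$- and $\ctext{comp}$-steps, after noting $(\labst{x}{P})Q \eql[\const{c}] \letin{x}{Q}{P}$.

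The crux is the rule $\ctext{let}$, $C[A] \reducto[\ctext{let}] \letin{x}{A}{C[x]}$; through $\floor{\blank}$ it reduces to the auxiliary claim that
\[
 C[A] \eql[\const{v}] (\labst{x}{C[x]})A
\]
for every call-by-value simple evaluation context $C$ and non-value $A$. When $C = V\blank$ this is a single $\etar[v]{\arrow}$-step; when $C = \blank M$ (or a box-shaped context) I would argue by induction: $\rulename{lift}$ or $\rulename{flat}$ exposes a redex or a variable head in $A$ and moves $C$ inward, and strong normalization of $\reducto[\const{v}]$ guarantees that the iterated peeling terminates at an instance of $\betar[v]{\arrow}$. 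The dual claim needed for the other direction, $C[A] \eql[\const{c}] \letin{x}{A}{C[x]}$, is just $\ctext{let}$ itself. This auxiliary fact is exactly the equational correspondence between Moggi's $\lambda_{\textrm{c}}$-calculus and the call-by-value calculus of \cite{SabryFelleisen93:RPCPS}, transported here; the modality adds nothing, since $\modal$ interacts with these rules only through box-shaped simple evaluation contexts, which the context closure handles uniformly. The main obstacle is thus bookkeeping rather than conceptual: pushing the \cite{SabryFelleisen93:RPCPS} argument through with $\modal$ present.

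Equivalently --- and in closer keeping with Section~\ref{SS:cbv} --- one can extend the CPS transform of Figure~\ref{FIG:cps} by $\cps{\letin{x}{N}{M}} = \labst{k}{\cps{N}(\labst{x}{\cps{M}k})}$, which has the same administrative normal form as $\cps{(\labst{x}{M})N}$, verify that the core Lemma holds verbatim with $\const{c}$ in place of $\const{v}$ (with $\ctext{let}$ and $\ctext{comp}$ in the roles of $\rulename{lift}$ and $\rulename{flat}$), conclude that $M \eql[\const{c}] N$ iff $\cps{M} \eql[\const{n}] \cps{N}$, and compose with the theorem relating $\eql[\const{v}]$ and $\eql[\const{n}]$ through $\cps{\blank}$. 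With either route the only genuinely new verifications are the routine commutations involving the $\modal$-constructs.
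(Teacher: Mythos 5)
The paper does not actually prove this proposition: it is introduced with ``it is easily seen'' and implicitly defers to the equational correspondence between Sabry and Felleisen's axioms and Moggi's $\lambda_{\textrm{c}}$-calculus already invoked in Section~\ref{SS:cbv}. Your proposal supplies exactly the argument that remark points to, so you are on the paper's (implicit) route: desugar by $\floor{\letin{x}{N}{M}} = (\labst{x}{\floor{M}})\floor{N}$, match the rules one by one, and reduce everything to the single nontrivial obligation $C[A] \eql[\const{v}] (\labst{x}{C[x]})A$ for a simple evaluation context $C$ and non-value $A$, which is the Sabry--Felleisen correspondence extended with the box-shaped contexts. Your rule-by-rule bookkeeping (the derivations of $\idr{\arrow}$, the simulation of $\ctext{comp}$ by $\rulename{lift}$, the use of $(\labst{x}{P})Q \eql[\const{c}] \letin{x}{Q}{P}$, the shared modal rules) is correct; for $\betar{\Omega}$ you additionally need the derived form of $\ctext{let}$ for nested evaluation contexts $E$, obtained by iterating $\ctext{let}$ and $\ctext{comp}$, but that is routine. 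The one soft spot is the justification of the key obligation: appealing to strong normalization of $\reducto[\const{v}]$ to terminate the ``peeling'' is not the right engine, since the peeling mixes reduction directions on both sides and SN alone does not deliver a common form. The clean argument is a structural induction on the non-value $A$: if $A = yN$, use $\rulename{flat}$ (or $\etar[v]{\arrow}$ when $C = V\blank$); if $A = (\labst{y}{P})N$, apply $\rulename{lift}$ on both sides and discharge $(\labst{x}{C[x]})P \eql[\const{v}] C[P]$ by $\betar[v]{\arrow}$ or the induction hypothesis according as $P$ is a value or not; if $A = C'[A']$ for a smaller non-value $A'$ (the remaining application and box cases), combine the induction hypothesis with $\rulename{lift}$ and $\betar{\Omega}$. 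With that adjustment the proof is complete. Your alternative route through a CPS clause for $\ftext{let}$ also works, but note it requires re-proving the inverse-transformation item of the core Lemma with $\ctext{let}$ and $\ctext{comp}$ in place of $\rulename{lift}$ and $\rulename{flat}$, which is not weaker than the direct argument.
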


We can show the strong normalization theorem of
the computational $\lambda\modal$-calculus
via the strong normalizability of
the $\lambda_{\textrm{c}}$-calculus.

\begin{proposition}
 The computational $\lambda\modal$-calculus is strongly normalizable
 with respect to $\reducto[\const{c}]$\@.
\end{proposition}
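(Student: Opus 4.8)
The natural first attempt is to copy Proposition~2: translate into the $\lambda_{\textrm{c}}$-calculus (the $\modal$-free fragment) by the continuation interpretation, with $\floor{\modal\tau}=(\floor{\tau}\imply p)\imply p$ and $\floor{\boxin{\vect{x}}{\vect{N}}{M}}=\labst{k}{\floor{\vect{N}}(\labst{\vect{x}}{k\floor{M}})}$. This does simulate the $\lambda_{\textrm{c}}$-fragment rules and $\betar[v]{\modal}$ — the latter by two $\betar[v]{\arrow}$-steps, exactly the computation in the proof of Proposition~2 — but it stalls on $\idr{\modal}$: $\floor{\boxin{x}{P}{x}}=\labst{k}{\floor{P}(\labst{x}{kx})}\reducto[{\etar[v]{\arrow}}]\labst{k}{\floor{P}k}$, and whereas Proposition~2 could then appeal to the \emph{unrestricted} $\eta$ of the simply typed $\lambda$-calculus, $\lambda_{\textrm{c}}$ offers only the value-restricted $\etar[v]{\arrow}$, and $\floor{P}$ need not be a value. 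So I would instead use the cruder \emph{erasure} translation that simply forgets the modality: $\floor{\modal\tau}=\floor{\tau}$, $\floor{\blank}$ the identity homomorphism on the non-box constructs, and $\floor{\boxin{\vect{x}}{\vect{N}}{M}}=\floor{M}\{\floor{N_1}/x_1,\ldots,\floor{N_n}/x_n\}$ (simultaneous substitution).

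The verifications would be: (1) $\floor{\blank}$ preserves typing — this is nothing but the interpretation of \logic{IK} that collapses $\modal$ to the identity, and it is well-defined because the box rule forces $\freev{M}\subseteq\{\vect{x}\}$ — and $\floor{\blank}$ maps values to values and commutes with substitution of values; (2) each of the six $\lambda_{\textrm{c}}$-fragment rules commutes with $\floor{\blank}$, so is carried to a single $\lambda_{\textrm{c}}$-step; (3) if $M\reducto N$ by $\idr{\modal}$ or $\betar[v]{\modal}$ then $\floor{M}\equiv\floor{N}$ — for $\idr{\modal}$ this is $\floor{\boxin{x}{P}{x}}=x\{\floor{P}/x\}=\floor{P}$, and for $\betar[v]{\modal}$ a short computation from (1). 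Granting these, $\floor{\blank}$ takes an infinite $\reducto[\const{c}]$-sequence to a sequence of $\lambda_{\textrm{c}}$-terms in which each $\lambda_{\textrm{c}}$-fragment step makes genuine progress and each $\idr{\modal}$- or $\betar[v]{\modal}$-step is invisible; since $\lambda_{\textrm{c}}$ is strongly normalizable, such a sequence can contain only finitely many $\lambda_{\textrm{c}}$-fragment steps, so the given sequence is eventually a $\reducto[\idr{\modal},{\betar[v]{\modal}}]$-sequence.

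The crux — and the step I expect to be the real obstacle — is therefore to show that $\reducto[\idr{\modal},{\betar[v]{\modal}}]$ is strongly normalizable on its own. The difficulty is that $\betar[v]{\modal}$ duplicates material: in its right-hand side the substitution $M\{V/x\}$ copies $V$, and with it every box inside $V$, once per occurrence of $x$, so merely counting box-subterms does not decrease. What works is to take as the measure the \emph{multiset of nesting depths} of the box-subterms of a term — the nesting depth of a box-subterm being the number of box-subterms properly containing it — ordered by the multiset extension of $<$ on $\mathbb{N}$, which is well-founded. A $\betar[v]{\modal}$-step removes the inner box $\boxin{\vect{y}}{\vect{N}}{V}$ and moves the boxes of $\vect{N}$ and of $V$ one nesting level up (duplicating those of $V$), while an $\idr{\modal}$-step removes $\boxin{x}{P}{x}$ and moves the boxes of $P$ one level up; in either case every deleted depth is replaced only by copies of strictly smaller depths, so the multiset strictly decreases. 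This gives strong normalizability of $\reducto[\idr{\modal},{\betar[v]{\modal}}]$, and with (1)--(3) above it gives the theorem. Carrying out these two decrease computations carefully — tracking exactly which boxes move and by how much — is the one point that needs care; the remainder is routine.
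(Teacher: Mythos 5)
Your overall architecture---a translation into the $\lambda_{\textrm{c}}$-calculus that renders the modal rules invisible, plus a separate termination argument for the residual modal rules---is exactly the shape of the paper's proof, and your opening diagnosis (the value restriction on $\etar[v]{\arrow}$ blocks the simulation of $\idr{\modal}$ under the continuation translation) is correct. But both of your concrete devices fail. First, the erasure $\floor{\boxin{\vect{x}}{\vect{N}}{M}}=\floor{M}\subst{\vect{x}}{\floor{\vect{N}}}$ does not simulate $\reducto[\ctext{let}]$ at a box context: for a non-value $A$, the step $\boxin{x}{A}{V}\reducto[\ctext{let}]\letin{y}{A}{\boxin{x}{y}{V}}$ is sent to the pair $\floor{V}\subst{x}{\floor{A}}$ and $\letin{y}{\floor{A}}{\floor{V}\subst{x}{y}}$, which differ by a let-\emph{expansion} at the (zero or more) occurrences of $x$ inside $\floor{V}$. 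This is not a $\const{c}$-reduction of the image (the reverse of $\betar[v]{\mathtt{let}}$ requires a value, and $\floor{V}\subst{x}{\floor{A}}$ is not of the form $C[\floor{A}]$), nor is it the identity; so this instance of $\ctext{let}$ is neither ``visible'' nor ``invisible'' and your dichotomy in steps (2)--(3) does not cover it. The paper avoids this by building the let-bindings into $\floor{\blank}$ itself: non-value box arguments are let-bound and only value arguments are substituted, so that this instance (the paper's $\reducto[\ctext{let}_{\modal}]$) becomes invisible and is added to the residual rules.

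Second, the multiset-of-nesting-depths measure, which you rightly identify as the crux, is wrong. Your claim that $\betar[v]{\modal}$ moves the boxes of $V$ ``one nesting level up'' ignores that $V$ is substituted for $x$ in the body $M$, and $x$ may occur arbitrarily deep under boxes of $M$ (box bodies are values, but values include boxes). A typable instance is
\begin{align*}
 &\boxin{x}{\boxin{y}{L}{\boxin{w}{y}{w}}}{\boxin{u}{\boxin{v}{x}{v}}{u}} \\
 &\linecont \reducto[{\betar[v]{\modal}}]
 \boxin{y}{L}{\boxin{u}{\boxin{v}{\boxin{w}{y}{w}}{v}}{u}} ,
\end{align*}
where the depth multiset goes from $\cset{0,1,1,2,2}$ to $\cset{0,1,2,3}$: the box $\boxin{w}{y}{w}$ moves from depth $2$ to depth $3$, the new depth $3$ exceeds every removed depth, and the multiset strictly \emph{increases}. (A further unaddressed point, affecting both the erasure and the measure: a step inside a box argument that the body discards is mapped to zero steps.) The paper proves termination of the residual rules $\cset{\idr{\modal},\betar[v]{\modal},\ctext{let}_{\modal}}$ differently, via the translation $\ceil{\blank}$ of Proposition~2 into the \emph{simply typed} $\lambda$-calculus, where the unrestricted $\etar{\arrow}$ is available; there $\idr{\modal}$ and $\betar[v]{\modal}$ become strictly decreasing, and $\ctext{let}_{\modal}$, which $\ceil{\blank}$ leaves fixed, terminates on its own because it strictly decreases the number of non-value box arguments. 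You would need to replace your measure by some such argument for your proof to go through.
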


\begin{proof}
 Define $\floor{\blank}$ into the typed $\lambda_{\textrm{c}}$-calculus by
 \begin{align*}
  \floor{\boxin{\vect{x}}{\vect{V}}{M}}
  &= \floor{M}\subst{\vect{x}}{\floor{\vect{V}}} ,\\
  \longeqn[\floor{\boxin{\vect{x}}{\vect{V}}{M}}]
  {\floor{\boxin{\vect{w},x,\vect{z}}{\vect{V},A,\vect{N}}{M}}} \\
  &= \letin{y}{\floor{A}}
  {\floor{\boxin{\vect{w},x,\vect{z}}{\vect{V},y,\vect{N}}{M}}} .
 \end{align*}
 One can see that $\floor{V}$ is a value when $V$ is a value,
 remembering that boxed terms are restricted to the form
 $\boxin{\vect{x}}{\vect{M}}{V}$\@.
 Let $\reducto[\ctext{let}_{\modal}]$ be
 the special case of $\reducto[\ctext{let}]$:
 \begin{align*}
  &\boxin{\vect{w},x,\vect{z}}{\vect{W},A,\vect{P}}{M} \\
  &\linecont \reducto[\ctext{let}_{\modal}]
  \letin{y}{A}{\boxin{\vect{w},x,\vect{z}}{\vect{W},y,\vect{P}}{M}}
  &&\norm{\vect{w}} = \norm{\vect{W}} .
 \end{align*}
 It can be checked that
 $M \reducto[\idr{\modal},{\betar[v]{\modal}},\ctext{let}_{\modal}] N$
 implies $\floor{M} \reducto[\const{c}]^\ast \floor{N}$,
 otherwise, $M \reducto[\const{c}] N$ implies
 $\floor{M} \reducto[\const{c}] \floor{N}$\@.
 Because we know the $\lambda_{\textrm{c}}$-calculus is SN
 (it was proved by Hasegawa in \cite{Hasegawa03:STCLCISN}),
 it is sufficient to show
 there is no infinite sequence that consists of
 $\reducto[\idr{\modal}]$, $\reducto[{\betar[v]{\modal}}]$,
 and $\reducto[\ctext{let}_{\modal}]$\@.

 We extend the transformation $\ceil{\blank}$,
 which is defined in the proof of Proposition~2,
 to the computational $\lambda\modal$-calculus by
 \begin{align*}
  \ceil{\letin{x}{N}{M}}
  &= \ceil{M}\subst{x}{\ceil{N}} .
 \end{align*}
 Then, $M \reducto[\idr{\modal},{\betar[v]{\modal}}] N$ implies
 $\ceil{M} \reducto[\betar{\arrow},\etar{\arrow}]^+ \ceil{N}$,
 and $M \reducto[\ctext{let}_{\modal}] N$ implies
 $\ceil{M} \equiv \ceil{N}$\@.
 Suppose the existence of an infinite sequence of
 $\reducto[\idr{\modal}]$, $\reducto[{\betar[v]{\modal}}]$,
 and $\reducto[\ctext{let}_{\modal}]$\@.
 Since there is no infinite reduction sequence
 in the simply typed $\lambda$-calculus,
 neither $\reducto[\idr{\modal}]$ nor $\reducto[{\betar[v]{\modal}}]$
 appears infinitely in the sequence.
 The assumption contradicts the fact that
 there is no infinite sequence of $\reducto[\ctext{let}_{\modal}]$\@.
\end{proof}

\begin{proposition}
 $\reducto[\const{c}]$ is confluent.
\end{proposition}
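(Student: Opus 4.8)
The plan is to reuse the strong normalization of $\reducto[\const{c}]$ just established and invoke Newman's lemma~\cite{Newman42:TCDE}, so that it suffices to check local confluence, i.e.\ that every critical pair is joinable. I would sort the critical pairs into three groups. The purely computational ones --- overlaps among $\idr{\mathtt{let}}$, $\betar[v]{\mathtt{let}}$, $\betar[v]{\arrow}$, $\etar[v]{\arrow}$, $\ctext{comp}$, and $\ctext{let}$ --- coincide with the critical pairs of Moggi's $\lambda_{\textrm{c}}$-calculus~\cite{Moggi88:CLCM} and are joinable exactly as there. The purely modal ones --- overlaps among the modal-specific reductions $\idr{\modal}$, $\betar[v]{\modal}$, and the box instance $\reducto[\ctext{let}_{\modal}]$ of $\ctext{let}$ isolated in the previous proof --- are the evident analogues of the four families of modal critical pairs treated in the proof that $\reducto[\const{n}]$ is confluent, and close the same way.

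That leaves the genuinely new, mixed critical pairs, in which a $\ctext{let}$- or $\ctext{comp}$-redex overlaps a modal redex. These arise in two shapes: (i) a simple evaluation context of the form $\boxin{\vect{w},x,\vect{z}}{\vect{W},\blank,\vect{P}}{M}$ wrapped around a non-value while at the same time a $\idr{\modal}$- or $\betar[v]{\modal}$-redex sits in $M$ or in one of $\vect{W}$, $\vect{P}$; and (ii) a $\betar[v]{\modal}$-step substituting a value $V$ for $x$ inside $M$ (or inside some $N_i$) while a $\ftext{let}$-reduction is available elsewhere in the same box. I expect this mixed group to be the only real obstacle. The facts that make the diagrams close are: $\betar[v]{\modal}$ substitutes only a \emph{value}, and substitution of a value commutes with every $\ftext{let}$-reduction, since it preserves values, simple evaluation contexts, and the $\letin{x}{N}{M}$ construct; the components $N_1,\dots,N_n$ of a box reduce independently, so a reduction inside one component and a modal reduction touching another commute trivially; and when a $\ctext{let}$ has pulled a non-value out of a box component and a $\betar[v]{\modal}$-step is also available on that box, one completes the extraction on both branches and then slides the resulting $\letin{y}{A}{\blank}$ prefix past the $\betar[v]{\modal}$-step using $\ctext{comp}$ and $\ctext{let}$. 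A careful but routine case split over which rule creates each overlap then closes every diagram.

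It is also worth recording an alternative in the spirit of the confluence proof for $\reducto[\const{v}]$: one can transport the statement along the let-elimination map $\letin{x}{N}{M}\mapsto(\labst{x}{M})N$ into the call-by-value $\lambda\modal$-calculus, whose confluence is already known, together with a back-translation in the reverse direction, exactly as the call-by-name calculus was used to prove $\reducto[\const{v}]$ confluent. That route, however, would require reinstating a Sabry--Felleisen-style pair of translations satisfying analogues of the four clauses of the earlier lemma, so the direct critical-pair analysis above is the more economical one here.
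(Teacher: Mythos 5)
Your proposal is correct and follows essentially the same route as the paper: Newman's lemma applied to the strong normalization result just proved, reduction of the local-confluence check to the new overlaps between $\reducto[\ctext{let}]$ and the modal rules (the $\lambda_{\textrm{c}}$ and call-by-name confluence results covering the rest), and joining the crucial pair --- a $\ctext{let}$-step extracting a non-value inner box versus a $\betar[v]{\modal}$-step merging it --- by completing the extraction on both branches and rearranging with $\ctext{comp}$ and $\betar[v]{\mathtt{let}}$. The only cosmetic difference is that the paper pins down exactly two essential critical pairs (the second being $\boxin{x}{\boxin{\vect{y}}{\vect{N}}{V}}{M}$) and computes the join explicitly, whereas your case (i) as literally stated does not quite name the $\idr{\modal}$-versus-$\ctext{let}$ overlap in which the whole box $\boxin{x}{M}{x}$ is itself the $\idr{\modal}$-redex; that pair is the easy one, so nothing is lost.
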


\begin{proof}
 According to Newman's lemma~\cite{Newman42:TCDE},
 we consider the local confluency.
 Because the $\lambda_{\textrm{c}}$-calculus and
 the call-by-name $\lambda\modal$-calculus are confluent,
 the following critical pairs are essential:
 \begin{align*}
  &\linecont \ureducto[\ctext{let}]
  \letin{y}{M}{\boxin{x}{y}{x}} \\
  &\boxin{x}{M}{x} \\
  &\linecont \dreducto[{\idr{\modal}}] M ,\\
  &\linecont \ureducto[\ctext{let}]
  \letin{z}{(\boxin{\vect{y}}{\vect{N}}{V})}{\boxin{x}{z}{M}} \\
  &\boxin{x}{\boxin{\vect{y}}{\vect{N}}{V}}{M} \\
  &\linecont \dreducto[{\betar[v]{\modal}}]
  \boxin{\vect{y}}{\vect{N}}{M\subst{x}{V}} .
 \end{align*}
 Confluence of the former pair is easily shown.
 For the latter case, let
 $\letin{\vect{w}}{\vect{N'}}{\boxin{\vect{y}}{\vect{W}}{x}}$
 be the $\reducto[\ctext{let}]$-normal form of
 $\boxin{\vect{y}}{\vect{N}}{x}$\@.
 \begin{align*}
  &\lefteqn{\letin{z}{(\boxin{\vect{y}}{\vect{N}}{V})}{\boxin{x}{z}{M}}} \\
  &\reducto[\ctext{let}]^\ast
  \letin{z}{(\letin{\vect{w}}{\vect{N'}}{\boxin{\vect{y}}{\vect{W}}{V}})}
  {\boxin{x}{z}{M}} \\
  &\reducto[\ctext{comp}]^\ast
  \letin{\vect{w},z}{\vect{N'},(\boxin{\vect{y}}{\vect{W}}{V})}
  {\boxin{x}{z}{M}} \\
  &\reducto[{\betar[v]{\ctext{let}}}]
  \letin{\vect{w}}{\vect{N'}}{\boxin{x}{\boxin{\vect{y}}{\vect{W}}{V}}{M}} \\
  &\reducto[{\betar[v]{\modal}}]
  \letin{\vect{w}}{\vect{N'}}{\boxin{\vect{y}}{\vect{W}}{M\subst{x}{V}}} .
  \end{align*}
 On the other hand, the lower term goes to
 the same term by $\reducto[\ctext{let}]^\ast$\@.
\end{proof}

We have restricted forms of terms in
the call-by-value calculi for CPS completeness.
Leaving completeness on one side,
now we can present another CPS transformation on full terms:
\begin{align*}
 \tcpsx{\modal{\sigma}}
 &= \modal((\tcpsx{\sigma}\imply\return)\imply\return) ,\\
 \longeqn[\tcpsx{\modal{\sigma}}]
 {\cpsx{\boxin{\vect{x}}{\vect{N}}{M}}} \\
 &= \labst{k}{\cpsx{\vect{N}}(\labst{\vect{y}}
 {k(\boxin{\vect{z}}{\vect{y}}
 {\labst{h}{\vect{z}(\labst{\vect{x}}{\cpsx{M}h})}})})} ,
\end{align*}
where a non-overridden part of the definition is
just the same as Figure~\ref{FIG:cps}\@.
We remark that the definition of $\cpsx{\blank}$ does not require
a value transformation like $\cpsv{\blank}$\@.
Also this transformation preserves the equality.

\begin{theorem}
 For $\lambda\modal$-terms $M$ and $N$,
 $M \eql[\const{v}] N$ implies $\cpsx{M} \eql[\const{n}] \cpsx{N}$\@.
\end{theorem}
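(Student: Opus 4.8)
The plan is to reduce the claim to the soundness half of Theorem~9 (the first, restricted CPS transformation), by exhibiting a uniform relationship between $\cpsx{\blank}$ and $\cps{\blank}$ on the restricted fragment, and then extending by the conservativity observation already made in Section~\ref{SS:cbv}. More precisely, I would first verify directly that $\cpsx{\blank}$ is \emph{compatible}, i.e.\ $M \reducto[\const{v}] N$ implies $\cpsx{M} \eql[\const{n}] \cpsx{N}$ for each reduction rule in $\const{v}$; since all the non-modal rules and the $\imply$-rules act on the non-overridden part of the definition, which agrees with Figure~\ref{FIG:cps}, those cases are literally the same calculations as in the proof of Theorem~5 via Proposition~7, so only $\idr{\modal}$ and $\betar[v]{\modal}$ need fresh attention. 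For $\idr{\modal}$, unfolding $\cpsx{\boxin{x}{M}{x}}$ and using $\betar{\arrow}$, $\etar{\arrow}$, and $\idr{\modal}$ on the inner box should collapse it to $\cpsx{M}$ up to $\eql[\const{n}]$. For $\betar[v]{\modal}$, the nested-box redex on the left must be pushed through the extra $\modal((\blank\imply\return)\imply\return)$ layer; here the inner $\labst{h}{\vect{z}(\labst{\vect{x}}{\cpsx{M}h})}$ bodies of the two boxes get merged by $\betar{\modal}$ after the administrative $\beta$-redexes fire, and one checks the result matches $\cpsx{\boxin{\vect{w},\vect{y},\vect{z}}{\cdots}{M\subst{x}{V}}}$ up to $\eql[\const{n}]$.

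Because $\eql[\const{v}]$ is by definition the reflexive–transitive–symmetric closure of $\reducto[\const{v}]$, compatibility for each rule immediately gives $M \eql[\const{v}] N \Rightarrow \cpsx{M} \eql[\const{n}] \cpsx{N}$, which is exactly the statement. So strictly speaking no appeal to Theorem~5 is needed; but I would still point out the parallel, because the bookkeeping in the $\imply$-cases is identical and can be imported verbatim rather than redone. The one genuinely new ingredient is the modal layer, and there the key lemma to isolate and prove first is a substitution lemma: $\cpsx{M\subst{x}{V}} \eql[\const{n}] \cpsx{M}\subst{x}{\cpsv{V}}$ — or, since $\cpsx{\blank}$ has no separate value transform, the correspondingly simpler $\cpsx{M\subst{x}{V}}$ relates to $\cpsx{M}$ with $x$ replaced by something extracted from $\cpsx{V}$. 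Getting the exact form of this substitution statement right, given that $\cpsx{\blank}$ is defined only on general terms and not split into a term/value pair, is the part I expect to require the most care.

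The main obstacle, then, is the $\betar[v]{\modal}$ case combined with this substitution lemma: one must track how the administrative redexes introduced by $\cpsx{\boxin{\vect{x}}{\vect{N}}{M}} = \labst{k}{\cpsx{\vect{N}}(\labst{\vect{y}}{k(\boxin{\vect{z}}{\vect{y}}{\labst{h}{\vect{z}(\labst{\vect{x}}{\cpsx{M}h})}})})}$ interact when one of the $N_i$ is itself a box, and confirm that after firing $\betar{\arrow}$ on the continuation-passing spine and $\betar{\modal}$ on the two adjacent boxes, the two doubly-wrapped bodies fuse into the single doubly-wrapped body of the contractum. I would handle this by computing both sides to a common $\reducto[\const{n}]$-normal form (using strong normalization and confluence from Propositions~2–4 to justify that such a comparison is legitimate), rather than trying to exhibit a direct reduction in one direction. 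The remaining rules — $\idr{\arrow}$, $\betar[v]{\arrow}$, $\etar[v]{\arrow}$, $\rulename{lift}$, $\rulename{flat}$, $\betar{\Omega}$ — are exactly Sabry–Felleisen's and are dispatched as in the proof underlying Theorem~5.
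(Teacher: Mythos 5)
The paper states this theorem with no proof at all (it is introduced only by the remark ``Also this transformation preserves the equality''), so there is nothing official to measure your attempt against; judged on its own terms, your plan is essentially sound. The decisive choices are right: since $\eql[\const{v}]$ is the congruent reflexive--transitive--symmetric closure of $\reducto[\const{v}]$, a rule-by-rule compatibility check suffices, and the only genuinely new content is in $\idr{\modal}$, $\betar[v]{\modal}$, and a substitution lemma adapted to the absence of a value translation. That lemma can be pinned down as follows: define $V^{\dagger}$ by induction on values, with $(\boxin{\vect{x}}{\vect{U}}{M})^{\dagger} = \boxin{\vect{z}}{\vect{U}^{\dagger}}{\labst{h}{\vect{z}(\labst{\vect{x}}{\cpsx{M}h})}}$ and the evident clauses for variables, constants and abstractions; then $\cpsx{V} \eql[\const{n}] \labst{k}{kV^{\dagger}}$ and $\cpsx{M\subst{x}{V}} \eql[\const{n}] \cpsx{M}\subst{x}{V^{\dagger}}$. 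This is exactly what the $\betar[v]{\arrow}$, $\etar[v]{\arrow}$ and $\betar[v]{\modal}$ cases consume: in the last, after the administrative $\betar{\arrow}$-steps and one $\betar{\modal}$-step the two double-negation-wrapped bodies fuse into $\labst{h}{\vect{v}(\labst{\vect{y}}{\cpsx{V}(\labst{x}{\cpsx{M}h})})}$, and the lemma rewrites $\cpsx{V}(\labst{x}{\cpsx{M}h})$ into $\cpsx{M\subst{x}{V}}h$, matching the image of the contractum. Your $\idr{\modal}$ sketch likewise checks out ($\etar{\arrow}$ collapses the body to $z$, then $\idr{\modal}$ and two more $\etar{\arrow}$-steps finish).

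Two soft spots deserve correction. First, your opening move --- relating $\cpsx{\blank}$ to $\cps{\blank}$ on the restricted fragment and then invoking conservativity --- does not work and is not needed: the theorem is precisely about full terms, and the conservativity remark of Section~3 only says restricted terms stay restricted under reduction; it gives no way to transfer equalities of full terms into the restricted calculus. You in effect abandon this route for the direct check, which is the correct move, but the first sentence should go. Second, the Sabry--Felleisen rules cannot be imported quite ``verbatim'': the simple evaluation contexts now include the box contexts $\boxin{\vect{x}}{\vect{V},\blank,\vect{M}}{M}$, so $\rulename{lift}$, $\rulename{flat}$ and $\betar{\Omega}$ each acquire a new subcase; these go through because $\cpsx{\blank}$ sequences the arguments of a box exactly as it sequences operator and operand of an application, but that observation has to be made explicitly. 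Finally, the appeal to strong normalization and confluence to compare normal forms is harmless but superfluous --- exhibiting a common reduct already witnesses $\eql[\const{n}]$.
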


Unfortunately, it can be seen that
this modified CPS transformation does not reflect the equality.
For example,
\begin{align*}
 &\cpsx{\boxin{x}{\boxin{y}{L}{N}}{M}} \\
 &\linecont \eql[\const{n}]
 \cpsx{\boxin{y}{L}{(\labst{x}{M})N}} ,
\end{align*}
but $\boxin{x}{\boxin{y}{L}{N}}{M}
\not\eql[\const{v}] \boxin{y}{L}{(\labst{x}{M}){N}}$
unless $N$ is a value.
It is still open to find an axiomatization
complete for $\cpsx{\blank}$\@.

\section{Semantics}\label{SS:sem}

Since Kripke semantics~\cite{Kripke63:SAMLINPL}
concern only provability,
they are not suitable for our study.
It is proposed by Bellin et al.\ in \cite{BellinPaiva01:ECHCBCML}
that a model of \logic{IK} is
a cartesian closed category with a lax monoidal endofunctor
with respect to cartesian products.
(Fundamental properties of monoidal functors
are found in \cite{MacLane97:CWM}\@.)
Indeed, it is shown in \cite{Kakutani07:CNCVNML} that
the call-by-name $\lambda\modal$-calculus with conjunctions
is sound and complete for the class of such models.
The completeness without conjunctions is expected to be proved
in a way similar to the case of the simply typed $\lambda$-calculus.
Bellin et al.'s calculus has the same syntax as ours,
but it is not complete for the semantics.

Semantics for the call-by-value calculus is
more complex than the call-by-name semantics.
We show construction of a call-by-value model as follows.

Let a cartesian closed category $\mathcal{C}$ have
a strong monad $\tuple{T,\unit,\mult}$ and
a monoidal endofunctor $\tuple{\modal,\const[1]{m},\const{m}}$\@.
We focus on the Kleisli category $\mathcal{C}_{T}$,
which is a model of the $\lambda_{\textrm{c}}$-calculus.
For a morphism $f \in \mor[\mathcal{C}]{B}{A}$,
there exists a morphism
$\unit\comp\modal{f} \in \mor[\mathcal{C}_{T}]{\modal{B}}{\modal{A}}$\@.
This fact explains a construction
\infrule{\var{x}{\sigma} \prove V \type{\tau}}
{\var{y}{\modal{\sigma}} \prove \boxin{x}{y}{V} \type{\modal{\tau}}}
which is functorial:
\begin{align*}
 &\boxin{x}{M}{x} \eql[\const{v}] M ,\\
 &\boxin{x}{\boxin{y}{M}{W}}{V}
 \eql[\const{v}] \boxin{y}{M}{V\subst{x}{W}} .
\end{align*}
The natural transformation
$\{ \const[A,B]{m} \in
\mor[\mathcal{C}]{\modal{A}\times \modal{B}}{\modal(A\times B)} \}$
induces a type-indexed family
$\{ \unit\comp\const[A,B]{m} \in
\mor[\mathcal{C}_{T}]{\modal{A}\times \modal{B}}{\modal(A\times B)} \}$\@.
This family is not a natural transformation but natural in values.
It explains an equation
\begin{align*}
 &\boxin{x,z}{\boxin{y}{N}{V},P}{M}
 \eql[\const{v}] \boxin{y,z}{N,P}{M\subst{x}{V}} .
\end{align*}

If a monad $T$ is a continuation monad,
that is, $TX = \return^{\return^{X}}$,
the categorical semantics coincides with the CPS semantics.

\section{Extensions}\label{SS:S4}

In this section, we show an extension of
the call-by-name calculus to \logic{IS4}\@.
A call-by-value axiomatization still remains future work.

We introduce type-indexed families of constants
$\{ \counit[\sigma] \type{\modal{\sigma}\imply\sigma} \}$ and
$\{ \comult[\sigma] \type{\modal{\sigma}\imply\modal{\modal{\sigma}}} \}$
with the following axioms:
\begin{align*}
 &\counit(\boxin{\vect{x}}{\vect{N}}{M})
 \eql[\rulename{nat}_{\counit}] M\subst{\vect{x}}{\counit\vect{N}} ,\\
 &\comult(\boxin{\vect{x}}{\vect{N}}{M})
 \eql[\rulename{nat}_{\comult}] \boxin{\vect{y}}{\comult\vect{N}}
 {\boxin{\vect{x}}{\vect{y}}{M}} ,\\
 &\comult(\comult M)
 \eql[\ctext{mon}] \boxin{x}{\comult M}{\comult x} ,\\
 &\counit(\comult M)
 \eql[\rulename{mon}] \boxin{x}{\comult M}{\counit x}
 \eql[\rulename{mon}] M .
\end{align*}
(It is trivial that this calculus corresponds to \logic{IS4}\@.)
We only consider equalities
because it is not obvious in some equations
which side is a result of a computation.
Naturally, it is possible to give calculi
for \logic{IT} and \logic{IK4}
as fragments of this \logic{IS4} calculus.

Bierman and de~Paiva introduced
the $\lambda^{\logic{S4}}$-calculus
in \cite{BiermanPaiva00:IML}\@.
We show our calculus can emulate the $\modal$-fragment
of their calculus.
Let
\begin{align*}
 &\boxsub{\vect{x}}{\vect{N}}{M}
 \equiv \boxin{\vect{x}}{\comult\vect{N}}{M} ,\\
 &\unbox{M} \equiv \counit M .
\end{align*}
The following equation holds in our calculus:
\begin{align*}
 &\unbox{(\boxsub{\vect{x}}{\vect{N}}{M})}
 \eql M\subst{\vect{x}}{\vect{N}} .
\end{align*}
On the other hand, Bierman and de~Paiva's calculus does not
emulate our calculus
because ours is complete for the class of
cartesian closed categories with monoidal comonads
but theirs is not.

It is also possible to compare our calculus
to a dual context version of \logic{IS4}
like Barber and Plotkin's DILL~\cite{Barber96:DILL}\@.
A dual context calculus for \logic{IS4} is
proposed by Pfenning and Davies in \cite{PfenningDavies01:JRML}\@.
(Their calculus has a diamond modality too,
but we just ignore it here.)
The dual context calculus requires
new syntax $\modal{M}$ and $\letinmodal{x}{N}{M}$
instead of $\boxin{\vect{x}}{\vect{N}}{M}$\@.
The typing rules consist of
\infrule{}{\varDelta,\var{a}{\tau},\varDelta' \csep \varGamma
 \prove a \type{\tau}}
\infrule{\varDelta \csep {} \prove M \type{\tau}}
{\varDelta \csep \varGamma \prove \modal{M} \type{\modal{\tau}}}
\infrule{\varDelta,\var{a}{\sigma} \csep \varGamma \prove M \type{\tau}
 \andalso \varDelta \csep \varGamma \prove N \type{\modal{\sigma}}}
{\varDelta \csep \varGamma \prove \letinmodal{\uptype{\sigma}a}{N}{M} \type{\tau}}
and the usual rule of the simply typed $\lambda$-calculus
with respect to right-hand contexts.
We use $a$, $b$, \dots for variables of left-hand contexts
to distinguish them from those of right-hand contexts.
The equality is defined by
\begin{align*}
 &C \type{\text{context}} ,\\
 &\letinmodal{a}{\modal{N}}{M} \eql M\subst{a}{N} ,\\
 &\letinmodal{a}{M}{\modal{a}} \eql M ,\\
 &C[\letinmodal{a}{N}{M}] \eql \letinmodal{a}{N}{C[M]}
 &&a \not\in \freev{C} ,
\end{align*}
where $C$ is a context such that
its hole does not appear under a box.

In the $\lambda\modal$-calculus,
we call the following equality the strongness condition:
\begin{align*}
 &\boxin{\vect{w},x,\vect{z}}{\vect{P},N,\vect{Q}}{M} \\
 &\linecont \eql[\rulename{st}] \boxin{\vect{w},\vect{z}}{\vect{P},\vect{Q}}{M}
 &&\norm{\vect{w}} = \norm{\vect{P}} ,\\
 &\boxin{\vect{w},x,y,\vect{z}}{\vect{P},N,N,\vect{Q}}{M} \\
 &\linecont \eql[\rulename{st}]
 \boxin{\vect{w},x,\vect{z}}{\vect{P},N,\vect{Q}}{M\subst{y}{x}}
 &&\norm{\vect{w}} = \norm{\vect{P}} .
\end{align*}
The following equality is called the symmetricity.
\begin{align*}
 &\boxin{\vect{w},x,y,\vect{z}}{\vect{P},N,L,\vect{Q}}{M} \\
 &\linecont \eql[\rulename{sym}]
 \boxin{\vect{w},y,x,\vect{z}}{\vect{P},L,N,\vect{Q}}{M}
 &&\norm{\vect{w}} = \norm{\vect{P}} .
\end{align*}
The terms ``strong'' and ``symmetric'' follow
the terms ``strong monoidal functor'' and
``symmetric monoidal functor'' in the category theory.
We show that the dual context calculus is equivalent to
the $\lambda\modal$-calculus with
the symmetricity and the strongness condition.
Define the transformation $\ceilx{\blank}$ from
the dual context calculus into the $\lambda\modal$-calculus by
\begin{align*}
 \ceilx{a} &= \counit a ,\\
 \ceilx{\modal{M}}
 &= \boxin{\vect{b}}{\comult{\vect{b}}}{\ceilx{M}}
 &&\mathconj{\text{where}} \{\vect{b}\} = \freev{M} ,\\
 \ceilx{\letinmodal{a}{N}{M}}
 &= (\labst{a}{\ceilx{M}})\ceilx{N} .
\end{align*}
For a derivable judgment
$\var{a_1}{\rho_1},\ldots \csep \var{x_1}{\sigma_1},\ldots
\prove M \type{\tau}$,
the judgment
\begin{align*}
 &\var{a_1}{\modal{\rho_1}},\ldots,\var{x_1}{\sigma_1},\ldots
 \prove \ceilx{M} \type{\tau}
\end{align*}
is derivable in the $\lambda\modal$-calculus,
and $M \eql N$ implies $\ceilx{M} \eql \ceilx{N}$
under the strongness condition and the symmetricity.
Its inverse $\floorx{\blank}$ can be defined by
\begin{align*}
 \floorx{\counit}
 &= \labst{y}{\letinmodal{a}{y}{a}} ,\\
 \floorx{\comult} &= \labst{y}{\letinmodal{a}{y}{\modal{\modal{a}}}} ,\\
 \floorx{\boxin{\vect{x}}{\vect{N}}{M}}
 &= \letinmodal{\vect{a}}{\floorx{\vect{N}}}
 {\modal(\floorx{M}\subst{\vect{x}}{\vect{a}})} .
\end{align*}
One can see that $M \eql N$ implies $\floorx{M} \eql \floorx{N}$\@.
While $\ceilx{\floorx{\blank}}$ is the identity up to the equality,
$\floorx{\ceilx{\blank}}$ is not the identity itself.
The reason is that the dual context calculus is
redundant in some sense:
for example, two judgments,
\begin{align*}
 &{} \csep \var{x}{\modal{\sigma}} \prove x \type{\modal{\sigma}} ,\\
 &\var{a}{\sigma} \csep {} \prove \modal{a} \type{\modal{\sigma}} ,
\end{align*}
have the same semantics.

Another possible extension of our calculus is
a calculus corresponding to the classical modal logic \logic{K}\@.
In \cite{Parigot92:LMC}, Parigot has extended
the simply typed $\lambda$-calculus to the $\lambda\mu$-calculus,
which corresponds to the classical logic.
We can extend the call-by-name $\lambda\modal$-calculus
with $\mu$-operator in a straightforward way.
A call-by-value version and
analyses of the relation between call-by-name and call-by-value
are found in \cite{Kakutani07:CNCVNML}\@.

\section*{Acknowledgments}

I am grateful to Tatsuya Abe.
This work has developed over discussions with him.
I also thank PPL referees for valuable comments.

\bibliographystyle{plain}
\bibliography{valkyrie}

\end{document}